% ****** Start of file apssamp.tex ******
%
%   This file is part of the APS files in the REVTeX 4.2 distribution.
%   Version 4.2a of REVTeX, December 2014
%
%   Copyright (c) 2014 The American Physical Society.
%
%   See the REVTeX 4 README file for restrictions and more information.
%
% TeX'ing this file requires that you have AMS-LaTeX 2.0 installed
% as well as the rest of the prerequisites for REVTeX 4.2
%
% See the REVTeX 4 README file
% It also requires running BibTeX. The commands are as follows:
%
%  1)  latex apssamp.tex
%  2)  bibtex apssamp
%  3)  latex apssamp.tex
%  4)  latex apssamp.tex
%
\documentclass[%
 reprint,
%superscriptaddress,
%groupedaddress,
%unsortedaddress,
%runinaddress,
%frontmatterverbose, 
% preprint,
%preprintnumbers,
%nofootinbib,
%nobibnotes,
%bibnotes,
 amsmath,amssymb,
 aps,
pra,
showkeys,
%prb,
%rmp,
%prstab,
%prstper,
% floatfix,
]{revtex4-2}

\usepackage{graphicx}% Include figure files
\usepackage{dcolumn}% Align table columns on decimal point
\usepackage{bm}% bold math
\usepackage{hyperref}% add hypertext capabilities
\hypersetup{
  colorlinks   = true, % Colours links instead of ugly boxes
  urlcolor     = blue, % Colour for external hyperlinks
  linkcolor    = blue, % Colour of internal links
  citecolor    = blue  % Colour of citations
}
%\usepackage[mathlines]{lineno}% Enable numbering of text and display math
%\linenumbers\relax % Commence numbering lines

%\usepackage[showframe,%Uncomment any one of the following lines to test 
%%scale=0.7, marginratio={1:1, 2:3}, ignoreall,% default settings
%%text={7in,10in},centering,
%%margin=1.5in,
%%total={6.5in,8.75in}, top=1.2in, left=0.9in, includefoot,
%%height=10in,a5paper,hmargin={3cm,0.8in},
%]{geometry}
\usepackage{amsthm}
\usepackage{braket}
\usepackage[dvipsnames]{xcolor}
\usepackage{tikz}
\usetikzlibrary{positioning,backgrounds,fit,calc}

\usepackage[ruled,vlined,linesnumbered]{algorithm2e}

\usepackage[caption=false]{subfig}

\newtheorem{definition}{Definition}[section]
\newtheorem{lemma}{Lemma}

\DeclareMathOperator*{\argmax}{arg\,max}

\begin{document}

\preprint{APS/123-QED}

\title{Decoding Quantum LDPC Codes using Collaborative Check Node Removal}
% \title{Decoding Quantum LDPC Codes by Improving the Separation of Trapped Qubits}% Force line breaks with \\
% \thanks{A footnote to the article title}%

\author{Mainak Bhattacharyya}
\email{mainak23@iiserb.ac.in}
 % \altaffiliation[Also at ]{Physics Department, XYZ University.}%Lines break automatically or can be forced with \\
\author{Ankur Raina}%
 \email{ankur@iiserb.ac.in}
\affiliation{%
 Department of Electrical Engineering and Computer Science, Indian Institute of Science Education and Research Bhopal, Bhopal 462066, India\\
 % This line break forced with \textbackslash\textbackslash
}%

% \collaboration{MUSO Collaboration}%\noaffiliation

% \author{Charlie Author}
 % \homepage{http://www.Second.institution.edu/~Charlie.Author}
% \affiliation{
%  Second institution and/or address\\
%  This line break forced% with \\
% }%
% \affiliation{
%  Third institution, the second for Charlie Author
% }%
% \author{Delta Author}
% \affiliation{%
%  Authors' institution and/or address\\
%  This line break forced with \textbackslash\textbackslash
% }%

% \collaboration{CLEO Collaboration}%\noaffiliation

\date{\today}% It is always \today, today,
             %  but any date may be explicitly specified

\begin{abstract}
   Fault tolerance in quantum protocols requires contributions from error-correcting codes and their suitable decoders.
   Quantum Low-Density Parity Check (QLDPC) codes are one of the most explored quantum codes that have good coding rate and efficient decoders.
   Iterative message passing-based decoders, although fast, fail to produce suitable success rates due to the colossal degeneracy and short cycles intrinsic to these codes.
   In this work we present a strategy to improve the performance of the Belief Propagation (BP) decoding, specifically the min-sum algorithm.
   We propose a collaborative decoding framework that integrates message passing with stabilizer check node removals.
   We further introduce the concept of ``qubit separation" and show that the improved decoding performance is directly related to the generation of highly separated trapped data qubits.
   To guide a more selective removal of check nodes that constrain the separation of the trapped data qubits, we introduce information measurements (IMs) for the data qubits and their adjacent stabilizer checks.
   % This approach significantly outperforms random check removal.
   We evaluate the performance of the proposed collaborative decoder on Generalized Hypergraph Product (GHP) codes and demonstrate that appropriate decoder configurations mitigate trapping sets in min-sum decoding without significant overhead.
\end{abstract}

\keywords{Quantum Error Correction, QLDPC codes, Decoders, Belief Propagation, Trapping sets.}%Use showkeys class option if keyword
                              %display desired
\maketitle

%\tableofcontents

\section{\label{sec:introduction}Introduction}
Quantum Error correction (QEC) has been a primary focus of many to realize the quantum protocols with high fidelity.
Over the years, Quantum Low-Density Parity Check (QLDPC) codes have emerged as the most suitable candidate for achieving practical fault tolerance due to their low overhead \cite{gottesman2013fault} and efficient distance scaling \cite{panteleev2021quantum,tillich2013quantum}, establishing them as strong candidates for practical QEC.
% A good error-correcting code refers to code that has a constant rate and shows linear or constant distance scaling.
In comparison with the classical LDPC codes, good QLDPC code constructions are less trivial.
The first set of construction towards achieving the same was given by Mackay $\emph{et al.}$ \cite{mackay2004sparse}.
Although one of the most famous constructions for good QLDPC codes is given by Tillich and Z{\'e}mor \cite{tillich2013quantum}.
These QLDPC codes called the Hypergraph Product (HGP) codes, achieve a quadratic distance scaling (that is, $d \propto \sqrt{n}$) as an outcome of the tensor product between any two classical codes and remained state-of-the-art QEC code for nearly a decade.
The well-known Toric code is a class of HGP code itself \cite{kitaev2003fault}.
More recent studies have presented the first constructions of asymptotically good QLDPC codes \cite{panteleev2022asymptotically,leverrier2022quantum}. 
For example, Panteleev and Kalachev proposed a generalization of the HGP construction \cite{panteleev2021degenerate}, which they later renamed Lifted Product codes \cite{panteleev2021quantum}, reporting the first QLDPC codes with an almost linear distance.\\\\ 
The performance of these QLDPC codes remains suboptimal under the currently available decoding algorithms.
Efficient decoders constitute a critical component of fault tolerant quantum computing architectures, as they are classical algorithms tasked with inferring suitable recovery operations from noisy syndrome measurements.
Unlike classical LDPC codes, the decoding of QLDPC codes becomes fundamentally difficult due to highly degenerate stabilizers and abundant short cycles present in the Tanner graphs of these codes.
% high degeneracies and abundant presence of short cycles in QLDPC codes posses a big hurdle during the decoding of QLDPC codes.
% Particularly iterative message passing based Belief Propagation (BP) decoders completely fail while decoding these highly degenerate quantum codes.
% For instance, the surface code exhibit no threshold while using only the BP decoders.
% In this article, we use both the iterative and BP decoders alternately on an equal footing to represent the same class of message-passing-based decoder.
Although very efficient, iterative message-passing algorithms like belief propagation (BP) decoders perform poorly on QLDPC codes.
A significant amount of research has been done to characterize these harmful configurations of quantum codes, namely the trapping sets (TS) \cite{raveendran2021trapping}.
Further, this knowledge of TS has been used extensively to design better message-passing-based decoders for QLDPC codes \cite{pradhan2023learning,liu2019neural,gong2024graph}.
% Further it significantly degrades the performance of iterative message-passing algorithms, and standard belief propagation (BP) decoders are known to perform poorly on highly degenerate QLDPC codes.
% The state-of-the-art QLDPC decoder, which has the capability of improving the decoding success with a mammoth leap is the BP+OSD decoder \cite{panteleev2021degenerate}.
% This particular decoder uses a post-processing classical method based on the Ordered Statistics Decoding (OSD), to solve a matrix inversion problem and would exhaustively find minimum weight error patterns that satisfy the syndrome.
% Although very accurate, even zeroth-order OSD post-processing
% has a $\mathcal{O}(n^3)$ time complexity in the number of physical qubits $n$ used by the error correcting code.
% Further improvements in the accuracy require higher-order post-processing of the same and are too expensive to use in any practical cases.
% Recent attempts have been made to achieve similar accuracy with reduced time complexity \cite{hillmann2024localized,wolanski2024ambiguity}.
% In addition, certain decoders have been developed thatat use BP anleverage properties of large codes, such as expansion to post-process the trapped BP outputs \cite{grospellier2021combining}.
Among the most effective solutions proposed to improve the performance of BP decoders, the BP+OSD decoder introduced in Ref. \cite{panteleev2021degenerate} remains state-of-the-art in terms of decoding accuracy. 
Although it aims to improve the outcome of BP by augmenting a classical post-processing step based on ordered statistics decoding (OSD), which solves a linear system to identify low-weight error configurations consistent with the measured syndrome. 
While highly accurate, even zeroth-order OSD post-processing incurs a computational complexity of $\Theta(n^3)$ in the number of physical qubits $n$. 
Higher-order OSD further improves decoding performance but rapidly becomes computationally prohibitive for any practical use. 
Recent works have sought to retain comparable decoding accuracy while reducing computational complexity \cite{hillmann2025localized,wolanski2024ambiguity}.\\\\
% Additionally, alternative decoding strategies have been developed that combine BP with structural properties of large QLDPC codes, such as expansion, to post-process and resolve BP-induced ambiguities \cite{grospellier2021combining}.
In this work, we propose a quantum collaborative check node removal (QCCNR) decoder.
We analyze the trapping set scenarios due to the min-sum BP algorithm.
Our decoding algorithm shows that carefully removing stabilizer check nodes from the computation tree of the BP decoder can help overcome the trapping sets of the quantum code.
Our contribution also includes a detailed analysis of how the carefully designed sub-routines, such as information measurement (IM), are applied to the stabilizer check node removals based on the degree of data qubits of the quantum code..
This achieves the feat of converging error beliefs for both the qubits belonging to classical and quantum-type trapping sets.
We justify the mitigation of quantum trapping sets by introducing a notion of qubit separation, which links the detrimental capacity of trapped qubits to the effect of removing certain harmful error beliefs from BP decoding iterations.
This removal of error beliefs is essentially the removal of stabilizer check nodes during message-passing operations, and we show that highly separated trapped qubits are a direct consequence of such stabilizer check node removal from the computation tree of the trapped data qubits.
One of the key advantages of our algorithm stems from the fact that its primary intent is to improve the message passing algorithm of the BP decoder itself and does not rely on expensive sub-routines like matrix-inversions of OSD \cite{panteleev2021degenerate} and SI \cite{du2022stabilizer}; based improvements to BP.\\\\
The paper is organized as follows: In Section \ref{sec:prelims}, we introduce the basics of classical, quantum error correction and also briefly cover the min-sum algorithm.
Section \ref{sec:trappingset-qubitseparation} includes the trapping set description of QLDPC codes.
Here we also introduce the notion of qubit separation and show how higher separation of the trapped qubits leads to improved convergence of the min-sum algorithm.
In section \ref{sec:collab-decoding}, we formulate the complete architecture of a collaborative decoding, which compiles the stabilizer check removal with the IM subroutine, while invoking the message passing algorithm.
Further, we show the effects of mitigating the quantum trapping set configurations through numerical memory experiments of certain QLDPC codes.

\section{Background}
\label{sec:prelims}
\subsection{Classical error correction}
Classical error correction typically represents encoding $k$ bits of information $p$ into $n$ bits of codeword $c$, where $n > k$.
The codespace $\mathcal{C}: c \in \mathcal{C}$ of the classical error correcting code represents the nullspace of an $m \times n$ matrix $\mathbf{H}$, called the parity check matrix, i.e.
\begin{align}
\label{eq:define-classical-code}
    \mathbf{H}\mathrm{c}^T = 0 \,\,\text{mod}\,\, 2,
\end{align}
where $m$ is the total number of parity check constraints.
Errors in classical bits typically flip the codeword bits and are detected via non-trivial syndromes.
For instance an error $\eta$, such that $\mathrm{c} \rightarrow \mathrm{y}:= \mathrm{c} + \eta$ violates Eq. \eqref{eq:define-classical-code}.
Therefore, the knowledge of the error $\eta$ is often deciphered using the syndrome $\mathrm{s}$, estimated using $\mathrm{y}$ and $\mathbf{H}$:
\begin{align}
\label{eq:syndrome-defn}
    \mathbf{H}\mathrm{y}^T = \mathrm{s}. 
\end{align}
% There exist non-zero syndromes for all errors $\eta$ with a Hamming weight less than the minimum distance $d$ of the classical code.
The minimum distance $d$ of a classical code indicates that the classical code can correct any $\frac{d-1}{2}$ bit errors.
All these correctable errors have unique non-zero syndromes.
The classical codes are, in general, represented using the notation of $[n,k,d]$.
\\
The parity check matrices of any $[n,k,d]$ code are often visualized as a bipartite graph, called the Tanner graph.
It consists of two sets of nodes: one representing the codeword bits and the other representing the parity-check constraints.
Further, an edge exists between these two types of nodes if the corresponding codeword is associated with the corresponding parity check constraint.
These Tanner graphs play a crucial role in many decoding algorithms and in code design.
\subsection{Quantum error correction}
Quantum systems suffer from a continuous set of errors, unlike the classical case, where the only source of errors is bit flips at discrete positions.
These errors can be modeled as a random occurrence of a discrete set of Pauli operators $\{I, X, Y, Z\}$ \cite{knill1997theory}.
To protect a quantum state against Pauli errors, a quantum error correcting code (QEC) encodes a $k$ qubit quantum state $\ket{\psi}$ into an $n$ qubit logical state $\ket{\psi}_L$.
These logical states have an additional property that it remains invariant under the influence of a set of operators called the stabilizers, i.e., $r_j \ket{\psi}_L = (+1)\ket{\psi}_L, \forall r_j \in \mathcal{S}$.
Here $\mathcal{S}$ is the group of all the mutually commuting stabilizer operators.
Any error $E$ on the encoded logical state either commutes with the stabilizers or anticommutes with them.
In the latter case, the error can be detected and corrected, as it takes the encoded state out of the codespace.
For instance, assume the erroneous quantum state is $\ket{\psi}_E = E \ket{\psi}_L$ and the consequence of anticommutation of the error operator with any of the stabilizer maps the erroneous state outside the codespace, i.e. $r_j \ket{\psi}_E = -\ket{\psi}_E$.
Although the erroneous state remains an eigenstate of the stabilizer with eigenvalue $-1$.
This, along with the fact that the stabilizer operators are Hermitian, is exploited to detect errors via projective quantum measurements of the stabilizer operators.
These measurements do not disturb the erroneous state $\ket{\psi}_E$, as it is still an eigenstate of the stabilizers but reveals the eigenvalue indicating an error.
The collection of measurement outcomes of all the stabilizer operators $\mathrm{s} = \{s_1, s_2, ..., s_{n-k}\}$ is called the syndrome.
\\
There are errors $E$, which commute with all the stabilizer operators.
This poses a dilemma as the erroneous state in this case still belongs to the codespace, i.e., $r_j \ket{\psi}_E = \ket{\psi}_E$.
Therefore, error $E$ cannot be detected via measurement of the stabilizers.
These errors belong to the centralizer group defined as $\mathcal{C}(\mathcal{S}) = \{E \in \mathcal{P}_n : Er_j = r_jE, \forall r_j \in \mathcal{S}\}$.
Although if $E \in \mathcal{C}(\mathcal{S}) \backslash \mathcal{S}$, then it acts non-trivially in the codespace and is still undetectable.
These errors have detrimental effects and are called logical errors.
Note that the codespace still remains invariant under the logical errors.
Minimum distance of these quantum codes is defined as the minimum weight of a logical error, i.e., $d = \text{min}\{|E|: E \in \mathcal{C}(\mathcal{S}) \backslash \mathcal{S}\}$.
All the QEC codes in general are represented using the notation $[[n,k,d]]$.

\subsection{Good QLDPC codes from ansatz}
Similar to classical codes, a parity check matrix (PCM) can be obtained for a $[[n,k,d]]$ quantum code using a binary mapping of the $n-k$ stabilizer operators.
It maps each of the $n$ qubit Pauli operators into a $2n$ bit string, as follows:
\begin{align}
    \bigotimes_{i=1}^N X^{x_i}Z^{z_i} \mapsto (x_1, ..., x_n | z_1, ..., z_n) : x_i,z_i \in \{0,1\}.
\end{align}
The above map is used, and the resultant PCM of size $(n-k \times 2n)$ has the form:
\begin{align}
    \label{eq:symplectic-quantum-pcm}
    \mathbf{H} = \begin{bmatrix}
        \mathbf{H}_X | \mathbf{H}_Z
    \end{bmatrix},
\end{align}
where each row of $\mathbf{H}$ is a binary representation of the stabilizers.
Therefore, from the mutual commutative nature of the stabilizers, it is easy to conclude the following condition:
\begin{align}
\label{eq:stabilizer-commute-formula}
    \mathbf{H}_X\mathbf{H}^{T}_Z + \mathbf{H}_Z\mathbf{H}^{T}_X = \mathbf{0}.
\end{align}
A particularly popular class of quantum codes namely the Calderbank-Shor-Steane (CSS) code \cite{calderbank1996good, steane1996error} has a special property that the non-identity operators of the stabilizer operators are of either Pauli-X or Pauli-Z type.
This effectively reduces Eq. \ref{eq:symplectic-quantum-pcm} into the following form:
\begin{align}
    \mathbf{H} = \begin{bmatrix}
        \mathbf{H}_X & \mathbf{0}\\
        \mathbf{0} & \mathbf{H}_Z
    \end{bmatrix},
\end{align}
Further it follows that Eq. \ref{eq:stabilizer-commute-formula} reduces down to
\begin{align}
    \label{eq:css-commute-formula}
    \mathbf{H}_X\mathbf{H}^{T}_Z = \mathbf{0}.
\end{align}
The Quantum Low-Density Parity Check (QLDPC) code family is a special set of CSS codes with the notable property of a sparse parity-check matrix.
This implies that the row and column weights of the PCM are upperbounded by some constants.
In this work, we particularly deal with two recently emerged families of QLDPC codes, namely the Generalized Hypergraph Product (GHP) codes \cite{panteleev2021degenerate} and the Generalized Bicycle (GB) codes \cite{kovalev2013quantum}.
Both of these codes originate through a clever exploitation of the commutativity conditions in Eq. \ref{eq:css-commute-formula}.
Panteleev and Kalachev discuss the use of large families of matrices called ansatz, which already satisfy Eq. \ref{eq:css-commute-formula}.
For instance consider to ansatz matrices $A$ and $B$, such that $AB=BA$ and 
\begin{align}
    \label{eq:pcm-gb-code}
    H_X = [A, B] , H_Z = [B^T, A^T].
\end{align}
From the above equation and the commutativity of the ansatzes, we observe $H_XH_Z^{T} = AB + BA = \mathbf{0}$.
Kovalev and Pryadko in \cite{kovalev2013quantum} proposed the use of binary circulant matrices for the construction of GB codes.
Onae similar note Panteleev and Kalachev proposed the GHP codes in \cite{panteleev2021degenerate}, as follows:
\begin{align}
\label{eq:pcm-ghp-code}
    \mathbf{H}_X = [A,bI_m] \,\,\text{and}\,\, \mathbf{H}_Z = [b^TI_n,A^T].
\end{align}
The primary element in this construction is the elements from a ring $R$ of $l \times l$ circulant matrices over the field $\mathbb{F}_2$ \cite{panteleev2021degenerate}.
The ring of all $m \times n$ matrices over ring $R$ is denoted by $\mathcal{M}_{m \times n}(R)$ or,  $\mathcal{M}_{n}(R)$, if $m = n$.
In Eq. \ref{eq:pcm-ghp-code}, $b \in \mathcal{M}_l(\mathbb{F}_2)$ and $A$ is a block matrix, such that $A \in \mathcal{M}_{m \times n}(R)$ or, $A \in \mathcal{M}_{ml \times nl}(\mathbb{F}_2)$.
In Appendix \ref{ap:codes} we describe in detail the GHP and GB codes used in this work to analyze the trapping set mitigation and decoder performance benchmarking.

\subsection{Belief propagation and min-sum algorithm}
The decoding problem of a quantum code can be described in terms of finding a correction operator $\hat{E}$ from the most-probable coset belonging to the coset partition of the $n$-fold Pauli group $\mathcal{G}_n$, which satisfies the measured syndrome $\mathrm{s}$.
This decoding problem for QLDPC codes is fundamentally different from its classical counterpart, as the latter deals with finding only the most-probable error vector, i.e. 
\begin{align}
    \hat{E} = \argmax_{E \in \mathcal{G}_n}P(E|\mathrm{s}).
\end{align}
For a uniformly distributed random noise model, the above problem can be further defined as determining the marginal probabilities of errors at each individual bit, i.e.,
\begin{align}
    P(e_i) = \sum_{\sim e_i} P(e_1, e_2, ..., e_i = 1, ..., e_n | \mathrm{s}),
\end{align}
where, $\sim e_i$ denotes a sum over all the $e_j$ except $e_i$.
Belief Propagation (BP) is an iterative message-passing algorithm that computes exact marginal distributions on tree-structured factor graphs and approximate marginals on loopy graphs.
The BP decoder takes the parity-check matrix and syndrome information as input and iteratively updates the error beliefs for each possible error location.
\\
In this work, we use the min-sum algorithm of BP for QLDPC codes.
The CSS nature of these codes ensures that the decoding problem for Pauli-X and Pauli-Z type errors can be addressed independently.
We use $v_i$ and $c_j$ to denote the $i^{th}$ data qubit node and $j^{th}$ stabilizer check nodes of the Tanner graph, respectively.
Also, $m^{k}_{a \rightarrow b}$ is the message passed at the $k^{th}$ iteration of the algorithm from node $a$ to node $b$ on the Tanner graph.
Following we describe an overall brief summary of the standard message passing rule of the min-sum BP algorithm:
\begin{enumerate}
    \item Initialization: Initially from each data qubits message $m^0_{v_i\rightarrow c_j}$ is passed as follows:
    \begin{align}
        m^0_{v_i\rightarrow c_j} = \lambda_i = \log(\frac{1-p_i}{p_i});\,\, \forall i,j : \mathbf{H}_{ji} = 1.
    \end{align}
    \item Check to data qubit messages: For the subsequent iterations, message $m^{k}_{c_j \rightarrow v_i}$ is passed from each check node $c_j$ to data qubits $v_i$.
    In such messages, we exclude the contribution of the message passed from the data qubit node $v_i$ to check qubit node $c_j$ in the previous ${k-1}^{th}$ iteration.
    The check $c_j$'s belief that error at $v_i$ has occurred is expressed as:
    \begin{align}
        m^{k}_{c_j \rightarrow v_i} = (-1)^{\mathrm{s}_j}\alpha \left(\prod_{v_p:\,N(c_j)\backslash v_i} \mathrm{sign}(m^{k-1}_{v_p\rightarrow c_j})\right) w,
    \end{align}
    where $\alpha$ is a scaling factor and can affect the convergence.
    $\mathrm{s}_j$ is the syndrome value corresponding to the check node $c_j$ and $w = \min_{v_p:\,N(c_j)/v_i}\{|m^{k-1}_{v_p\rightarrow c_j}|\}$
    Also, here and in the rest of the article, we denote the set of neighborhoods of any data qubit $q_i$ or stabilizer check $c_j$ as $N(q_i)$, $N(c_j)$, respectively.
    \item Data qubit to check messages: Except for the initialization as described, at each iteration $k$, the messages passed from data qubits to check nodes follows:
    \begin{align}
        m^k_{v_i\rightarrow c_j} = \lambda_i + \sum_{c_p: N(v_i)\backslash c_j}m^{k-1}_{c_p\rightarrow v_i}
    \end{align}
    \item Hard decision: At the end of each iteration, the marginal is computed at each data qubit node $v_i$ as:
    \begin{align}
        \gamma_i \gets \lambda_i + \sum_{c_p: N(v_i)}m^{k}_{c_p\rightarrow v_i}.
    \end{align}
    Further, a hard decision is made from the newly computed marginals.
    If $\mathrm{sign}(\gamma_i) = -1$, then the hard decision for data qubit $i$ is $\hat{e}_i = 1$, otherwise $\hat{e}_i = 0$.
\end{enumerate}
According to the hard decisions taken at the end of each iteration, if $\hat{e} = \{e_i\}$ satisfies the syndrome, then the BP decoder has converged and $\hat{e}$ is returned; otherwise, the decoder moves on to the next message passing iteration, until a predetermined number of iterations have been completed.
\\
Although the above algorithm can not be used to decode quantum codes without any modifications.
For instance, the BP decoder cannot converge if there exist degenerate error configurations like $E_1$ and $E_2$, such that $E_1 + E_2 \in \text{rowspace}(\mathbf{H})$.
The phenomena leading to such scenarios are known as split beliefs.
In the later sections, we formally characterize such harmful configurations of quantum codes and propose the metric of qubit separation to discuss more intuitively an approach of removing redundant check nodes from the message passing sub-routines of BP to assist the convergence of the decoder.
% \subsection{Related works}
\section{Mitigating the trapping sets of QLDPC codes using qubit separation}
\label{sec:trappingset-qubitseparation}
Iterative message passing decoders like min-sum BP are vulnerable when it comes to certain error configurations.
Certain topologies of sub-graphs containing $a$ number of data qubits and $b$ number of odd degree check nodes contributes to these detrimental behavior of the min-sum decoder, is called a $(a,b)$ trapping set (TS).
In this section we describe the origin of these trapping sets for the $[[882,24]]$ GHP code and use a novel metric of qubit separation to argue that stabilizer check nodes if removed from the computation tree of the message passing algorithm of BP, can contribute to mitigation of trapping sets.
Construction details of the $[[882,24]]$ GHP code can be found in Appendix \ref{ap:codes}.
\subsection{Trapping sets of $[[882,24]]$ GHP code}
For QLDPC codes, there exist two classes of harmful TS configurations: Classical-type trapping sets (CTS) and Quantum trapping sets (QTS) \cite{raveendran2021trapping}.
\begin{definition}[Classical-type Trapping sets]
    \label{def:classical trapping set}
    A classical-type trapping set is a set of qubits that either do not converge or
    are adjacent to unsatisfied stabilizer checks after a predetermined number of iterations of a syndrome-based iterative decoder.
\end{definition}
For example in Fig. \ref{fig:3_3-cts}, we show a $(3,3)$ CTS obtained from $\mathbf{H}_Z$ of the $[[882,24]]$ GHP code.
\begin{figure}[t]
\centering
% ===================== (a) Classical =====================
\subfloat[Classical trapping set\label{fig:3_3-cts}]{
\centering
\resizebox{0.4\columnwidth}{!}{%
\begin{tikzpicture}[
    node distance=0.95cm,
    scale=0.95,
    every node/.style={font=\footnotesize},
    var/.style={circle, draw=black, fill=blue!25, minimum size=6mm, inner sep=1.2pt},
    unsat/.style={rectangle, draw=black, fill=red!25, minimum size=5.8mm, inner sep=1.2pt},
    sat/.style={rectangle, draw=black, fill=black!15, minimum size=5.6mm, inner sep=1.1pt},
    edge/.style={draw=black, line width=0.7pt}
]

\node[var] (v0) at (0, 0) {$v_0$};
\node[unsat] (c0) [above=of v0] {$c_0$};
\node[sat] (c1) [right=of v0] {$c_1$};
\node[var] (v1) [right=of c1] {$v_1$};
\node[unsat] (c2) [above=of v1] {$c_2$};
\node[sat] (c7) [below=of v1] {$c_7$};
\node[sat] (c6) [below=of v0] {$c_6$};
\node[var] (v6) [below=of c1] {$v_6$};
\node[unsat] (c12) [below=of v6] {$c_{12}$};

\draw[edge] (v0) -- (c0);
\draw[edge] (v0) -- (c1);
\draw[edge] (v0) -- (c6);
\draw[edge] (v1) -- (c1);
\draw[edge] (v1) -- (c2);
\draw[edge] (v1) -- (c7);
\draw[edge] (v6) -- (c6);
\draw[edge] (v6) -- (c7);
\draw[edge] (v6) -- (c12);

\end{tikzpicture}
}
}
\vfill
% \hfill
% ===================== (b) Quantum =====================
\subfloat[Quantum trapping set\label{fig:3-3-qts}]{
\centering
\resizebox{0.85\columnwidth}{!}{%
\begin{tikzpicture}[
    node distance=0.95cm,
    scale=0.95,
    every node/.style={font=\footnotesize},
    varA/.style={circle, draw=black, fill=blue!25, minimum size=6mm, inner sep=1.2pt},
    varB/.style={circle, draw=black, fill=violet!25, minimum size=6mm, inner sep=1.2pt},
    check/.style={rectangle, draw=black, fill=black!15, minimum size=5.8mm, inner sep=1.2pt},
    edge/.style={draw=black, line width=0.7pt},
    dottededge/.style={draw=black, dotted, line width=0.65pt}
]

\node[varA] (v0) at (0, 0) {$v_0$};
\node[check] (c0) [right=of v0] {$c_0$};
\node[varB] (v477) [right=of c0] {$v_{477}$};
\node[check] (c405) [below right=of v477] {$c_{405}$};
\node[varA] (v405) [below right=of c405] {$v_{405}$};
\node[check] (c406) [below left=of v405] {$c_{406}$};
\node[varB] (v478) [below left=of c406] {$v_{478}$};
\node[check] (c352) [left=of v478] {$c_{352}$};
\node[varA] (v351) [left=of c352] {$v_{351}$};
\node[check] (c357) [above left=of v351] {$c_{357}$};
\node[varB] (v483) [above left=of c357] {$v_{483}$};
\node[check] (c6) [above right=of v483] {$c_{6}$};

\node[check] (c1) at ($(v0)!0.75!(v478)$) {$c_{1}$};
\node[check] (c411) at ($(v483)!0.25!(v405)$) {$c_{411}$};
\node[check] (c351) at ($(v351)!0.75!(v477)$) {$c_{351}$};

\draw[edge] (v0) -- (c0);
\draw[edge] (c0) -- (v477);
\draw[edge] (v477) -- (c405);
\draw[edge] (c405) -- (v405);
\draw[edge] (v405) -- (c406);
\draw[edge] (c406) -- (v478);
\draw[edge] (v478) -- (c352);
\draw[edge] (c352) -- (v351);
\draw[edge] (v351) -- (c357);
\draw[edge] (c357) -- (v483);
\draw[edge] (v483) -- (c6);
\draw[edge] (c6) -- (v0);

\draw[dottededge] (v478) -- (c1);
\draw[dottededge] (v0) -- (c1);
\draw[dottededge] (v483) -- (c411);
\draw[dottededge] (v405) -- (c411);
\draw[dottededge] (v351) -- (c351);
\draw[dottededge] (c351) -- (v477);

\end{tikzpicture}
}
}
\caption{In (a) we show a typical $(3,3)$ 6-cycle classical type trapping set \cite{chytas2024collective} and in (b) is a $(6,0)$ Quantum Trapping set of the $[[882,24]]$ GHP code \cite{chytas2024collective}.}
\label{fig:trapping-sets-of-882-24}
\end{figure}
The qubit and check node indexing shows that the $(3,3)$ CTS is formed due to the contributions of the polynomial $b$ and can be tracked down from the matrix $b^T I_n$ \cite{chytas2024collective}.
The \textit{min-sum} BP decoder fails to converge in this scenario due to the symmetric message passing rules.
For instance, a syndrome with support on $\mathrm{supp}(\mathrm{s}) = \{c_0,c_1,c_2,c_6,c_7,c_{12}\}$, indicating the violated stabilizer checks, will lead to an oscillation for the \textit{min-sum} BP.
The BP decoder output oscillates between the data qubit nodes $\{v_0, v_1, v_6\}$ and an all zero error pattern.
% This phenomena is further illustrated in Appendix \ref{ap:oscillatory-3-3-cts}.
% \textcolor{red}{Could I check the simulation script 'TS analysis' another time, also add the oscillation simulation fig in the appendix}.
\\\\
The other type of TS, specific to the QLDPC codes is the Quantum Trapping sets (QTS).
\begin{definition}[Quantum Trapping sets]
    \label{def:quantum trapping sets}
    A quantum trapping set is a collection of an even number of data qubits or, symmetric stabilizers, such that the induced sub-graph contains no odd-degree stabilizer check nodes. 
    This set of nodes form two degenerate subsets of error patterns of equal weight with a set of common odd-degree stabilizer (check) neighborhood.
\end{definition}
A QTS with $a$ number of data qubits is referred as a $(a,0)$ trapping set.
Fig. \ref{fig:3-3-qts} shows a typical QTS formed for the $[[882,24]]$ GHP code with $6$ data qubits in the QTS \cite{chytas2024collective}.
The two degenerate error patterns of this $(6,0)$ QTS can be identified as $V_a = \{v_0, v_{351}, v_{405}\}$ and $V_b = \{v_{477}, v_{478}, v_{483}\}$.
Further, as per the definition \ref{def:quantum trapping sets}. $V_a$ and $V_b$ have the same set of odd-degree stabilizer check neighborhood. That is, $N(V_a) = N(V_b) = \{c_0, c_1, c_6, c_{405}, c_{406}, c_{411}, c_{351}, c_{352}, c_{367}\}$.
\begin{lemma}
    \label{lem:qts}
    Iterative decoders with critical number $\frac{a}{2}$ posses no TS characteristics for the $(a,0)$ QTS, if the cardinality of the error subsets $e_a \subseteq V_a$ or $e_b \subseteq V_b$ exceeds $\frac{a}{2}$ \cite{raveendran2021trapping}.
\end{lemma}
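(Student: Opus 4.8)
The plan is to reduce everything to the two structural facts packaged in Definition~\ref{def:quantum trapping sets}: a $(a,0)$ QTS carries an isomorphism $\phi$ between its two disjoint halves $V_a$ and $V_b$, each of size $a/2$, and these halves share a common odd-degree check neighbourhood $N(V_a)=N(V_b)$. The first thing I would record is that $\mathcal{T}_s=V_a\cup V_b$ lies in $\ker\mathbf{H}$, since every adjacent check has even degree; for the low-weight configurations relevant to good QLDPC codes (where the minimum distance rules out a weight-$a$ logical) this means $\mathcal{T}_s$ is a \emph{stabilizer}, so $V_a$ and $V_b$ sit in the same coset. More generally, for $e_a\subseteq V_a$ the twin $e_b:=\phi(e_a)\subseteq V_b$ makes every check of $N(V_a)$ even-degree inside $e_a\cup e_b$, which is exactly why the \textit{min-sum} decoder of Section~\ref{sec:iterative-decoding} cannot locally tell $e_a$ and $e_b$ apart. (Since the lemma is attributed to \cite{raveendran2021trapping}, one could simply invoke it; I sketch a direct argument.)

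Second, I would make that indistinguishability quantitative by an induction on the iteration index $k$ over the subgraph induced by $\mathcal{T}_s\cup N(\mathcal{T}_s)$. With the uniform priors $\lambda_i$, I claim the \textit{min-sum} updates are equivariant under $\phi$ and the induced permutation of the shared checks: $m^{k}_{v\to c}=m^{k}_{\phi(v)\to\phi(c)}$ and $m^{k}_{c\to v}=m^{k}_{\phi(c)\to\phi(v)}$ throughout the subgraph whenever the input syndrome is the $\phi$-symmetric one produced by $e_a\oplus e_b$. The base case is the initialisation step, and the inductive step is a direct substitution into the check-to-qubit and qubit-to-check rules, using that $\phi$ is a graph isomorphism of the induced subgraph and preserves the syndrome bits $\mathrm{s}_j$. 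Consequently the soft outputs $\gamma_i$ on $V_a$ and on $V_b$ move in lockstep, the hard decision cannot single out one half, and one recovers precisely the non-convergent oscillation between the full pattern $e_a\oplus e_b$ and the all-zero pattern noted above --- i.e.\ the QTS acts as a trapping set exactly while the two competing explanations have equal weight and lie in \emph{different} stabilizer cosets.

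The core of the lemma is then that this competition disappears once the error cardinality on a half attains its maximal value $a/2=|V_a|$ (the boundary intended by the hypothesis ``exceeds $\frac{a}{2}$''): the sole such pattern is $e_a=V_a$, with twin $e_b=V_b$, and $e_a\oplus e_b=\mathcal{T}_s\in\mathcal{S}$, so $e_a$ and $e_b$ lie in the \emph{same} coset. Hence whatever the message passing settles on --- $e_a$, $e_b$, $\mathcal{T}_s$, or the all-zero pattern --- it is a logically trivial correction of the syndrome, so no decoding failure can be charged to this configuration. Feeding in the hypothesis that the decoder has critical number $a/2$ on the classical-type substructure (Definition~\ref{def:classical trapping set}) induced by a single half --- so that every error of weight strictly below $a/2$ supported on that half is already corrected on its own, and the QTS symmetry is the only remaining source of trapping --- one concludes that at error weight $a/2$ on a half the decoder exhibits no trapping-set characteristics, which is the claim.

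\textbf{Main obstacle.} I expect the delicate points to be (i) pinning down the precise meaning of the decoder's critical number $a/2$ and arguing that it genuinely transfers from the isolated half-plus-its-checks to the full quantum configuration without the twin half re-creating the tie --- in particular excluding the degenerate possibility that $e_a\oplus\phi(e_a)$ fails to lie in $\ker\mathbf{H}$ for some proper $e_a\subsetneq V_a$, which can happen only in codes where $\mathcal{T}_s$ is a logical operator rather than a stabilizer; and (ii) making the message-symmetry induction watertight when vertices of $V_a$ are incident to checks outside $N(\mathcal{T}_s)$, where one must either restrict to the horizon before such external messages influence the induced subgraph or argue that they are carried identically to $e_a$ and its twin. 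Once the decoder model is fixed, the remainder is an unremarkable unfolding of the \textit{min-sum} update equations.
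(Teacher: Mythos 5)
The paper does not actually prove this lemma: it is imported verbatim from \cite{raveendran2021trapping} and used as a black box, so there is no in-paper argument to compare yours against. Judged on its own merits, your attempt captures the right two ingredients (the $\phi$-equivariance of the \textit{min-sum} updates on the induced subgraph, and the degeneracy $e \sim \mathcal{T}_s + e$ coming from $\mathcal{T}_s \in \ker\mathbf{H}$), but it does not yet prove the statement.

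The main gap is that you prove the wrong case. Since $|V_a| = |V_b| = a/2$, no subset of a single half can literally ``exceed $a/2$''; the intended content (as in \cite{raveendran2021trapping}) is that an error $e \subseteq \mathcal{T}_s$ of total weight greater than $a/2$ --- necessarily supported on \emph{both} halves, and generically with $e \cap V_b \neq \phi(e \cap V_a)$ --- is handled because the decoder converges to the strictly lower-weight complement $\mathcal{T}_s + e$, which is a degenerate-equivalent correction. Your equivariance induction only applies when the input syndrome is $\phi$-symmetric, i.e.\ precisely in the harmful regime $e_b = \phi(e_a)$ that the lemma excludes; it says nothing about convergence for the asymmetric errors the lemma is actually about. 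Relatedly, your boundary case $e_a = V_a$ concludes that ``whatever the message passing settles on'' is logically trivial, but the failure mode here is oscillation, i.e.\ \emph{not} settling: two syndrome-matching explanations lying in the same coset does not by itself prevent the decoder from ending its iteration budget on a non-syndrome-matching estimate. To close the argument you would need to show that for $|e| > a/2$ the weight asymmetry breaks the tie and the \textit{min-sum} dynamics actually converge to $\mathcal{T}_s + e$ (this is where the ``critical number $a/2$'' hypothesis does real work), and separately dispose of the possibility that $\mathcal{T}_s$ is a logical operator rather than a stabilizer --- a caveat you correctly flag but do not resolve.
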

The critical number therefore denotes the largest possible error cardinality for which message-passing decoders, such as min-sum BP fail when the error pattern is supported on the QTS. 
Following the Lemma \ref{lem:qts}, error patterns of cardinality less or equal $\frac{a}{2} = 3$, on any of the degenerate set is a harmful configuration for the min-sum BP decoder.
For instance any $e_a \subseteq V_a : |e_a| \leq |V_a|$ has an exact twin $e_b \subseteq V_b$.
This results into an oscillation between $e_a \oplus e_b$ and an all zero error at each iterations of the \textit{min-sum} decoder.
% \textcolor{blue}{Need to move this into qubit separation: Note that, the common stabilizer check neighborhood of the $(6,0)$ QTS is itself the set of odd degree check nodes, when we consider the disjoint subsets independently.
% Therefore, the potential violation of check nodes occur from $N(V_a)$ or, $N(V_b)$ (i.e. $\mathtt{UNSAT} \subseteq N(V_a) = N(V_b)$) for any of the possible harmful error patterns for the \textit{min-sum} decoder.
% % disjoint subsets and therefore emerges as the unsatisfied check nodes 
% This particular phenomena requires different measures compared to the approach for CTS, while attempting to improve the iterative decoder's performance using qubit separation, which we elaborate next.
% }
\subsection{Convergence of min-sum BP for highly separated trapped qubits}
The oscillatory behavior of the BP decoder is majorly contributed by the error beliefs of certain stabilizer check nodes.
For instance, the error beliefs from the stabilizer check nodes with even number of neighboring data qubits in both the CTS and QTS is detrimental to BP convergence.
We also observe that the check nodes contributing to the harmful error beliefs are also responsible for limiting the separation of trapped qubits.
In our example of a $(3,3)$ CTS and $(6,0)$ QTS, these check nodes are $\{c_1, c_6, c_7\}$ and $\{c_0, c_1, c_6, c_{405}, c_{406}, c_{411}, c_{351}, c_{352}, c_{367}\}$ respectively.
Therefore, a natural intuition is to remove the message contributions of these check nodes from the marginal estimations of the corresponding data qubits to ensure better convergence of the BP decoder.\\
The message-passing dynamics of the BP decoder can be more systematically understood through the framework of computation trees \cite{frey2001signal,wiberg1996codes}.
In the context of quantum error correction, we adopt the following definition of computation tree (CT):
\begin{definition}[Computation Tree]
    \label{def:cmputation-tree}
    The computation tree $\mathcal{T}_\mathcal{K}(t)$ for an iterative decoder on a code is constructed by picking a root node `$t$' corresponding to either a qubit or a stabilizer check node and then iteratively adding `$\mathcal{K}$' levels of edges and leaf nodes, where each level represents a complete iterations (i.e., qubit (check) $\rightarrow$ check (qubit) $\rightarrow$ qubit (check)) of the messages passed for the iterative decoder.
\end{definition}

\begin{figure}[t]
\centering
\begin{tikzpicture}[
    node distance=0.95cm,
    scale=0.82,
    transform shape,
    every node/.style={font=\footnotesize},
    varA/.style={circle, draw=black, fill=blue!25, minimum size=6mm, inner sep=1.2pt},
    varB/.style={circle, draw=black, fill=green!25, minimum size=6mm, inner sep=1.2pt},
    unsat/.style={rectangle, draw=black, fill=red!25, minimum size=5.8mm, inner sep=1.2pt},
    sat/.style={rectangle, draw=black, fill=black!15, minimum size=5.8mm, inner sep=1.2pt},
    edge/.style={draw=black, line width=0.7pt},
    rededge/.style={draw=red!70!black, line width=0.7pt},
    dottededge/.style={draw=black, dotted, line width=0.65pt}
]

% ===== Nodes (IDENTICAL GEOMETRY) =====

\node[varA] (v0) at (0, 0) {$v_0$};
\node[unsat] (c0) [below=of v0] {$c_0$};
\node[sat] (c1) [left=of c0] {$c_1$};
\node[sat] (c6) [right=of c0] {$c_6$};

\node[varB] (v477) [below=of c0] {$v_{477}$};
\node[varB] (v62) [left=of v477] {$v_{62}$};
\node[varB] (v57) [left=of v62] {$v_{57}$};
\node[varB] (v513) [right=of v477] {$v_{513}$};
\node[varB] (v567) [right=of v513] {$v_{567}$};

\node[sat] (c5) [below left=of v62] {$c_5$};
\node[sat] (c62) [below right=of v62] {$c_{62}$};

\node[varB] (v5) [below right=of c5] {$v_{5}$};
\node[varB] (v4) [below left=of c5] {$v_{4}$};

\node[sat] (c6_a) [below right=of v5] {$c_{6}$};
\node[sat] (c11) [below left=of v5] {$c_{11}$};

\node[varA] (v6) [below right=of c6_a] {$v_6$};

\node[varB] (v10) [below left=of c11] {$v_{10}$};
\node[varB] (v11) [right=of v10] {$v_{11}$};

\node[varB] (v56) [below right=of c62] {$v_{56}$};    
\node[sat] (c57) [below right=of v56] {$c_{57}$}; 
\node[varB] (v57_a) [below right=of c57] {$v_{57}$};  

% ===== Edges (UNCHANGED) =====

\draw[edge] (c0) -- (v0);
\draw[edge] (c1) -- (v0);
\draw[edge] (c6) -- (v0);
\draw[edge] (c0) -- (v57);
\draw[rededge] (c0) -- (v62);
\draw[edge] (c0) -- (v477);
\draw[edge] (c0) -- (v513);
\draw[edge] (c0) -- (v567);
\draw[rededge] (c5) -- (v62);
\draw[edge] (c62) -- (v62);
\draw[rededge] (c5) -- (v5);
\draw[edge] (c5) -- (v4);
\draw[edge] (c11) -- (v5);
\draw[rededge] (c6_a) -- (v5);
\draw[rededge] (c6_a) -- (v6);
\draw[edge] (c11) -- (v10);
\draw[edge] (c11) -- (v11);
\draw[edge] (c62) -- (v56);
\draw[edge] (c57) -- (v56);
\draw[edge] (c57) -- (v57_a);

% ===== Dotted Extensions =====

\node (v1r) [right=of v567] {};
\node (v1l) [left=of v57] {};
\node (v2r) [right=of v5] {};
\node (v3l) [below left=of c6_a] {};

\draw[dottededge] (c6) -- (v1r);
\draw[dottededge] (c1) -- (v1l);
\draw[dottededge] (c5) -- (v2r);
\draw[dottededge] (c6_a) -- (v3l);

% ===== Layer separators (unchanged coordinates) =====

\draw[dottededge, line width=1pt] (-5, -3.7) -- (5, -3.7);
\draw[dottededge, line width=1pt] (-5, -6.1) -- (5, -6.1);
\draw[dottededge, line width=1pt] (-5, -8.5) -- (5, -8.5);

\node at (4.5,-3.4) {\textbf{\textcolor{black}{1}}};
\node at (4.5,-5.8) {\textbf{\textcolor{black}{2}}};
\node at (4.5,-8.2) {\textbf{\textcolor{black}{3}}};

\end{tikzpicture}

\caption{Computation tree of the trapped qubit $v_0$ of CTS $(3,3)$.
Each level of the computation tree are indicated through the dotted line, representing each complete iteration of the BP decoder.
In level $3$ we observe the data qubit node $v_6$, which belongs to the same $(3,3)$ CTS.
This limits the qubit separation of $v_0$ to $q_k = 2$.
}
\label{fig:computation-tree-3-3}
\end{figure}
In Fig. \ref{fig:computation-tree-3-3} we show a level $3$ computation tree $\mathcal{T}_3(v_0)$ constructed from the message passing pattern of the min-sum BP on the $[[882,24]]$ GHP code.
Each level of $\mathcal{T}_3(v_0)$ represents a  complete iteration of the decoder.
\\
Analyzing the BP decoder's performance using CT helps us introduce the notion of qubit separation.
We argued before that exclusion of the error beliefs from harmful stabilizer check nodes can improve the BP decoder's performance.
This essentially implies removal of certain check nodes from the computation tree of the decoder.
We now claim that this procedure is not arbitrary and is directly linked to a phenomena, which generates highly separated trapped data qubits.
In classical coding theory, a similar approach is used to improve the performance of the min-sum algorithm to break the trapping sets \cite{kang2015breaking}.
We extend those concepts to the non-trivial case of quantum trapping sets.
Firstly we identify that the separation of qubits should be defined separately for the two type of trapping sets in the quantum codes.
Suppose, $\mathcal{T}_s$ denotes the set of data qubits that are trapped in any TS.
The separation of a qubit trapped in a classical-type trapping set is defined as follows:
\begin{definition}[Qubit separation for CTS]
    \label{def:classical-separation}
    If an erroneous data qubit node $v \in V_1$ for any $V_1 \subset \mathcal{T}_s$; has at least one degree-one check node $c$, such that, within $\mathcal{K}$ number of message passing iterations of the decoder, there exists no more data qubit $u \in V_1$ as a descendant of $c$ in $T_{\mathcal{K}}(c)$, then the root data qubit $v$ is said to be $\mathcal{K}$ separated.
\end{definition}
We use $q_k$ to denote any qubit separation.
In Fig. \ref{fig:computation-tree-3-3} we can observe that the trapped qubit $v_0$ belonging to the $(3,3)$ CTS has a separation $q_k = 2$.
\\
Now the immediate predecessor stabilizer check node connected to $v_6$ at level $\mathcal{K}=3$ is $c_6$.
Therefore, removal of $c_6$ from level $\mathcal{K} = 3$, effectively increases the qubit separation of $v_0$.
In Fig. \ref{fig:computation-tree-3-3}, we have shown one of such paths that leads to the limiting separation of the trapped qubit.
Although there are exactly two paths, which leads to $q_k = 2$ for $v_0$.
Therefore, removal of all those corresponding precursor check nodes in level $\mathcal{K} = 3$ of the computation tree, increases the qubit separation of $v_0$.
\\\\
The origin of the QTS is different than the CTS.
A QTS has no odd-degree stabilizer check nodes.
Therefore, the qubit separations for data qubits in a QTS can not be defined similar to the case of a CTS.
Consider Lemma \ref{lem:qts} and the degenerate subset of data qubits in the QTS.
The common stabilizer neighborhood of the subsets essentially gets violated for error patterns, which generates the QTS characteristics.
The definition of the qubit separation for QTS, therefore  follows:
\begin{definition}[Qubit separation for QTS]
    \label{def:quantum-separation}
    If a QTS has degenerate subsets $V_a$ and $V_b$, then
    for an erroneous data qubit node $v \in V_a$ with at least one degree-one (w.r.t. the same subset data qubit nodes) check node $c$, is said to be $\mathcal{K}$ separated, if within $\mathcal{K}$ number of message passing iterations of the decoder, there exists no more data qubit $u \in V_a$ as a descendant of $c$ in $T_{\mathcal{K}}(c)$.
\end{definition}
The same definition extends towards the separation of qubits in any of the degenerate subset of qubits from the QTS.
For instance, consider the two degenerate subsets $V_a$ and $V_b$ of the $(6,0)$ QTS discussed previously.
The common stabilizer check neighborhood $N(V_a)$ or, $N(V_b)$ of these two subsets is responsible for both limiting the separation of the trapped data qubits and contributing to the violated stabilizer nodes when the error is supported on a harmful error pattern.

Therefore, in the computation tree of a data qubit trapped inside a QTS, the removal of check nodes should happen from the level $1$ itself, to increase the root data qubit's separation.
We now state 
\begin{lemma}
    A QLDPC code with data qubits having degree $d_v$, has a total of $d_v(d_v - 1)$ number of stabilizer check nodes $c_r: c_r \in N(V_a) = N(V_b)$, which limits the separation of any erroneous data qubit $v_r \in V_a$ or, $v_r \in V_b$ with $V_a$ and $V_b$ together forming the QTS.
    \label{cor:cor-no-check-rem}
\end{lemma}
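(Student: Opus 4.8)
The plan is to read the count directly off the computation tree, using the symmetric bipartite decomposition $V_a, V_b$ of the QTS. First I would pin down the local incidence structure of the common neighborhood $N := N(V_a) = N(V_b)$: since the induced subgraph of the QTS carries no odd-degree check (Definition \ref{def:quantum trapping sets}), every check that is odd with respect to $V_a$ is also odd with respect to $V_b$; and for the \emph{elementary} QTS relevant to the GHP/GB families each such check is incident to exactly one qubit of $V_a$ and exactly one qubit of $V_b$ (any extra incidence would either make the check even on one side or break the ``isomorphic disjoint subsets'' property). Consequently, if $v_r \in V_a$ is the erroneous qubit, each of its $d_v$ neighbouring checks lies in $N$, is degree-one with respect to $V_a$ (hence an admissible $c$ in Definition \ref{def:quantum-separation}), and carries a unique ``partner'' qubit $w(c) \in V_b$.

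Next I would fix one neighbouring check $c$ of $v_r$, build the rooted computation tree $T_\mathcal{K}(c)$, and trace the shortest walk back into $V_a$. Level $1$ contains $v_r$ and $w(c)$. The $d_v - 1$ remaining checks of $v_r$ sit at level $2$, but through the one-$V_a$/one-$V_b$ incidence each of them only reaches a qubit of $V_b$ at level $3$; the $d_v - 1$ remaining checks of $w(c)$, which also sit at level $2$, each reach a qubit of $V_a$ at level $3$, and that qubit cannot be $v_r$ (else $v_r - c - w(c) - c' - v_r$ is a $4$-cycle). Hence $v_r$ is exactly $2$-separated through $c$, and the checks whose deletion pushes these $V_a$-descendants to a strictly deeper level of $T_\mathcal{K}(c)$ form precisely $N(w(c)) \setminus \{c\}$, a set of size $d_v - 1$.

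Finally I would take the union over the $d_v$ admissible choices of $c$ among the neighbours of $v_r$. Distinct neighbouring checks of $v_r$ have distinct partners $w(c)$ (otherwise a forbidden $4$-cycle appears), and since every check of $N$ has a unique $V_b$-partner the sets $N(w(c)) \setminus \{c\}$ are pairwise disjoint and also disjoint from the $d_v$ neighbours of $v_r$; moreover every check in these sets is odd with respect to $V_b$, so it lies in $N$. Summing yields $d_v(d_v - 1)$ distinct check nodes in $N(V_a) = N(V_b)$ that limit the separation of $v_r$, and the $V_a \leftrightarrow V_b$ symmetry gives the same count for an erroneous qubit of $V_b$.

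The step I expect to be the main obstacle is the very first one: showing that every check of $N$ is incident to exactly one qubit of $V_a$ and one of $V_b$, and that $c \mapsto w(c)$ is injective on the neighbours of $v_r$. These are genuine structural hypotheses (``elementary'' QTS, absence of the above type of $4$-cycle) rather than consequences of Definition \ref{def:quantum trapping sets} in full generality, and for QTS with short cycles or higher-multiplicity checks the count can only decrease; the clean route is therefore to state the lemma for the elementary QTS of the GHP/GB codes discussed in Section \ref{sec:trap-set} and to verify the incidence pattern there from the circulant form of $\mathbf{H}_X, \mathbf{H}_Z$ in Eq. \eqref{eq:pcm-ghp-code}.
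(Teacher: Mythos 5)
Your proposal is correct and follows essentially the same counting as the paper's proof: the $d_v$ checks adjacent to the erroneous qubit each lead to exactly one partner qubit in the opposite isomorphic subset, and each partner contributes its remaining $d_v-1$ checks, giving $d_v(d_v-1)$ in total (your sets $N(w(c))\setminus\{c\}$ are exactly the level-$2$ checks of the paper's tree $T(v_r)$, e.g.\ the six checks listed in Table \ref{table:qts-882}). The only differences are the choice of root for the computation tree (you root at a check node, consistent with Definition \ref{def:quantum-separation}, while the paper roots at $v_r$) and a welcome increase in rigor: you make explicit the one-$V_a$/one-$V_b$ incidence of each common check, the injectivity of $c\mapsto w(c)$, and the pairwise disjointness needed to avoid overcounting, all of which the paper's proof assumes tacitly and which, as you rightly note, are structural hypotheses on the elementary QTS of these code families rather than consequences of Definition \ref{def:quantum trapping sets} in full generality.
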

\begin{proof}
    The proof of the above is very simple and is a direct consequence of the isomorphic property of the QTS.
    In a typical QTS, there are no odd-degree stabilizer check nodes.
    All the data qubits $v_r \in V_i$ belonging to the QTS exhibit $d_v$ number of adjacent stabilizer check nodes in layer $1$ of $T(v_r)$, and each of these check nodes has exactly one leaf data qubit $v_s \in V_j : i \neq j$ at level $1$ of $T(v_r)$.
    Further from the definition of QTS, we can conjecture that the isomorphic nature of the two disjoint subsets results in a set of data qubits at level $2$ of $T(v_r) \forall v_r \in V_i$, which limits the qubit separation.
    We assume these qubits are child nodes of the parent $p$ number of check nodes.
    Each leaf data qubit nodes at level $1$ of $T(v_r)$ has $d_v - 1$ number of child check nodes at level $2$, and some of these child check nodes at level $2$, has a descendant leaf data qubits $v_{r^\prime}: r^\prime \in V_i, r \neq r^{\prime}$, which contributes to limiting separation for trapped data qubit $v_r$.
    % contributes to $d_v - 1$ number of check nodes at level $2$, which has one child data qubit at level $2$ of $T(v_r)$, which contributes to limiting qubit separation.
    So, $p = q(d_v - 1)$, where $q$ is the number of eligible leaf data qubit nodes at level $1$ of the CT, such that they have $v_{r^\prime} \in V_i$ as descendants at level $2$ of $T(v_r)$.\\
    % these eligible data qubit nodes has $(d_v - 1)$ number of check nodes adjacent to data qubits from the other disjoint subset.
    % Now finally we need to look at how many parent data qubit nodes at level $1$ of the CT, has these $d_v -1$ number of check nodes at level $2$.
    This number is simply $q = d_v$, because the root data qubit has $d_v$ number of child check nodes, and from each such check node, one of the descendant qubits in level $1$ belongs to the other isomorphic qubit subset of the QTS compared to the root data qubit node.
    Therefore, in total $d_v(d_v - 1)$ number of check nodes at level $2$ are responsible for the limiting separation of root trapped data qubit $v_r$.
    % connected to qubits at level $2$, which contributes to the limiting condition for qubit separation.
\end{proof}
\begin{figure}[t]
\centering

% PRA consistent node textures
\tikzset{
blue node/.style={circle, draw=black, fill=blue!25, minimum size=6mm, inner sep=1.2pt},
violet node/.style={circle, draw=black, fill=violet!25, minimum size=6mm, inner sep=1.2pt},
red node/.style={rectangle, draw=black, fill=red!25, minimum size=5.8mm, inner sep=1.2pt},
black node/.style={rectangle, draw=black, fill=black!15, minimum size=5.8mm, inner sep=1.2pt},
line/.style={draw=black, line width=0.7pt},
dot_line/.style={draw=black, dotted, line width=0.65pt}
}

% ---------------- (a) ----------------

\subfloat[]{
\resizebox{0.4\textwidth}{!}{
\begin{tikzpicture}[node distance=1cm, every node/.style={font=\footnotesize}]

\node[blue node] (v0) at (0,0) {$v_0$};
\node[red node] (c0) [below=of v0] {$c_0$};
\node[red node] (c1) [left=of c0] {$c_1$};
\node[red node] (c6) [right=of c0] {$c_6$};

\node[violet node] (v478) at (-3.5,-3.5) {$v_{478}$};
\node[violet node] (v477) [below=of c0] {$v_{477}$};
\node[violet node] (v483) at (3.5,-3.5) {$v_{483}$};

\path[line] (v0)--(c0);
\path[line] (v0)--(c1);
\path[line] (v0)--(c6);
\path[line] (v477)--(c0);
\path[line] (v478)--(c1);
\path[line] (v483)--(c6);

\node[black node] (c406) [below left=of v478] {$c_{406}$};
\node[black node] (c352) [below right=of v478] {$c_{352}$};
\node[blue node] (v405) [below=of c406] {$v_{405}$};
\node[blue node] (v351) [below=of c352] {$v_{351}$};

\node[black node] (c351) [below left=of v477] {$c_{351}$};
\node[black node] (c405) [below right=of v477] {$c_{405}$};
\node[blue node] (v405b) [below=of c405] {$v_{405}$};
\node[blue node] (v351b) [below=of c351] {$v_{351}$};

\node[black node] (c411) [below left=of v483] {$c_{411}$};
\node[black node] (c357) [below right=of v483] {$c_{357}$};
\node[blue node] (v405a) [below=of c411] {$v_{405}$};
\node[blue node] (v351a) [below=of c357] {$v_{351}$};

\path[line] (c406)--(v478);
\path[line] (c352)--(v478);
\path[line] (c406)--(v405);
\path[line] (c352)--(v351);
\path[line] (v483)--(c411);
\path[line] (c357)--(v483);
\path[line] (c411)--(v405a);
\path[line] (c357)--(v351a);
\path[line] (v477)--(c351);
\path[line] (v477)--(c405);
\path[line] (c405)--(v405b);
\path[line] (c351)--(v351b);

\path[dot_line] (c1)--++(-1.5,-1);
\path[dot_line] (c1)--++(0.5,-1);
\path[dot_line] (c0)--++(-0.5,-1);
\path[dot_line] (c0)--++(0.5,-1);
\path[dot_line] (c6)--++(1.5,-1);
\path[dot_line] (c6)--++(-0.5,-1);

\draw[dot_line, line width=1pt] (-5,-4.2)--(6,-4.2);
\draw[dot_line, line width=1pt] (-5,-7.4)--(6,-7.4);

\node at (5.4,-3.7) {\textbf{1}};
\node at (5.4,-6.8) {\textbf{2}};

\end{tikzpicture}}}
\vfill

% ---------------- (b) ----------------

\subfloat[]{
\resizebox{0.4\textwidth}{!}{
\begin{tikzpicture}[node distance=1cm, every node/.style={font=\footnotesize}]

\node[blue node] (v405) at (0,0) {$v_{405}$};
\node[red node] (c405) [below=of v405] {$c_{405}$};
\node[red node] (c406) [left=of c405] {$c_{406}$};
\node[red node] (c411) [right=of c405] {$c_{411}$};

\node[violet node] (v478) at (-3.5,-3.5) {$v_{478}$};
\node[violet node] (v477) [below=of c405] {$v_{477}$};
\node[violet node] (v483) at (3.5,-3.5) {$v_{483}$};

\path[line] (v405)--(c405);
\path[line] (v405)--(c406);
\path[line] (v405)--(c411);
\path[line] (v477)--(c405);
\path[line] (v478)--(c406);
\path[line] (v483)--(c411);

\node[black node] (c352) [below right=of v478] {$c_{352}$};
\node[black node] (c1) [below left=of v478] {$c_{1}$};
\node[black node] (c357) [below right=of v483] {$c_{357}$};
\node[black node] (c6) [below left=of v483] {$c_{6}$};

\node[blue node] (v0a) [below=of c1] {$v_{0}$};
\node[blue node] (v351) [below=of c352] {$v_{351}$};
\node[blue node] (v351a) [below=of c357] {$v_{351}$};
\node[blue node] (v0b) [below=of c6] {$v_{0}$};

\node[black node] (c0) [below left=of v477] {$c_{0}$};
\node[black node] (c351) [below right=of v477] {$c_{351}$};

\node[blue node] (v0c) [below=of c0] {$v_{0}$};
\node[blue node] (v351c) [below=of c351] {$v_{351}$};

% Additional connections
\path[line] (c1)--(v478);
\path[line] (c352)--(v478);
\path[line] (v351)--(c352);
\path[line] (v0a)--(c1);

\path[line] (v483)--(c6);
\path[line] (v483)--(c357);
\path[line] (v0b)--(c6);
\path[line] (v351a)--(c357);

\path[line] (c0)--(v477);
\path[line] (c351)--(v477);
\path[line] (c0)--(v0c);
\path[line] (c351)--(v351c);

\path[dot_line] (c406)--++(-1.5,-1);
\path[dot_line] (c406)--++(0.5,-1);
\path[dot_line] (c405)--++(-0.5,-1);
\path[dot_line] (c405)--++(0.5,-1);
\path[dot_line] (c411)--++(1.5,-1);
\path[dot_line] (c411)--++(-0.5,-1);

\draw[dot_line, line width=1pt] (-5,-4.2)--(6,-4.2);
\draw[dot_line, line width=1pt] (-5,-7.4)--(6,-7.4);

\node at (5.4,-3.7) {\textbf{1}};
\node at (5.4,-6.8) {\textbf{2}};

\end{tikzpicture}}}
\vfill

% ---------------- (c) ----------------

\subfloat[]{
\resizebox{0.4\textwidth}{!}{
\begin{tikzpicture}[node distance=1cm, every node/.style={font=\footnotesize}]

\node[blue node] (v351) at (0,0) {$v_{351}$};
\node[red node] (c357) [below=of v351] {$c_{357}$};
\node[red node] (c352) [left=of c357] {$c_{352}$};
\node[red node] (c351) [right=of c357] {$c_{351}$};

\node[violet node] (v478) at (-3.5,-3.5) {$v_{478}$};
\node[violet node] (v483) [below=of c357] {$v_{483}$};
\node[violet node] (v477) at (3.5,-3.5) {$v_{477}$};

\path[line] (v351)--(c352);
\path[line] (v351)--(c357);
\path[line] (v351)--(c351);
\path[line] (v478)--(c352);
\path[line] (v483)--(c357);
\path[line] (v477)--(c351);

\node[black node] (c0) [below left=of v477] {$c_{0}$};
\node[black node] (c405) [below right=of v477] {$c_{405}$};
\node[blue node] (v0) [below=of c0] {$v_{0}$};
\node[blue node] (v405) [below=of c405] {$v_{405}$};

\node[black node] (c6) [below left=of v483] {$c_{6}$};
\node[black node] (c411) [below right=of v483] {$c_{411}$};
\node[blue node] (v0b) [below=of c6] {$v_{0}$};
\node[blue node] (v405b) [below=of c411] {$v_{405}$};

\node[black node] (c1) [below left=of v478] {$c_{1}$};
\node[black node] (c406) [below right=of v478] {$c_{406}$};
\node[blue node] (v0a) [below=of c1] {$v_{0}$};
\node[blue node] (v405a) [below=of c406] {$v_{405}$};

\path[line] (c0)--(v477);
\path[line] (c405)--(v477);
\path[line] (c0)--(v0);
\path[line] (c405)--(v405);

\path[line] (c6)--(v483);
\path[line] (c411)--(v483);
\path[line] (c6)--(v0b);
\path[line] (c411)--(v405b);

\path[line] (c1)--(v478);
\path[line] (c406)--(v478);
\path[line] (c1)--(v0a);
\path[line] (c406)--(v405a);

\path[dot_line] (c352)--++(-1.5,-1);
\path[dot_line] (c352)--++(0.5,-1);
\path[dot_line] (c357)--++(-0.5,-1);
\path[dot_line] (c357)--++(0.5,-1);
\path[dot_line] (c351)--++(1.5,-1);
\path[dot_line] (c351)--++(-0.5,-1);

\draw[dot_line, line width=1pt] (-5,-4.2)--(6,-4.2);
\draw[dot_line, line width=1pt] (-5,-7.4)--(6,-7.4);

\node at (5.4,-3.7) {\textbf{1}};
\node at (5.4,-6.8) {\textbf{2}};

\end{tikzpicture}}}

\caption{Computation tree and the potential structure indicating the scope of separation improvement for all the trapped qubits of one of the disjoint subsets of the $(6,0)$ QTS of $[[882,24]]$ GHP code.}
\label{fig:ct-qts60}
\end{figure}
In Fig. \ref{fig:ct-qts60}, we show three computation trees of the trapped data qubits $v_0$, $v_{405}$ and $v_{351}$ respectively.
All of them belong to the $(6,0)$ QTS.
We observe from the computation trees of $T(v_i) : i \in \{0, 405, 351\}$, that in level $\mathcal{K}=2$ of each CT there are exactly $3(3-1) = 6$ stabilizer check nodes, which are directly adjacent to the data qubits belonging to the same disjoint qubit subset of the QTS, indicated by same color.\\
We now show that removing these check nodes from the computation tree helps the min-sum BP decoder converge and correct the QTS-supported harmful errors.
\begin{lemma}
    Any erroneous qubit $v_r \in V_i$, where $V_i$ is one of the isomorphic degenerate data qubit subsets of the QTS; can be corrected by \textit{min-sum} BP decoder if the $d_v(d_v-1)$ number of check nodes from the level $2$ of $T(v_r)$, which are limiting the separation of $v_r$; are removed from the computation tree of the decoder.
    \label{cor:sep-imp-rem-chk}
\end{lemma}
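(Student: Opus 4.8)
The plan is to combine the structural identification in Lemma~\ref{cor:cor-no-check-rem} with the classical ``large separation implies a correctable root'' principle quoted above (Proposition~2 of \cite{kang2015breaking}, equivalently Theorem~1 of \cite{zhang2013quantized}). The point I want to establish is not merely that deleting the $d_v(d_v-1)$ check nodes pinned down by Lemma~\ref{cor:cor-no-check-rem} raises the separation of $v_r$ by one level, but that it drives that separation to infinity inside the trapping set, after which the quoted principle applies to \textit{min-sum} BP run on the modified Tanner graph and forces the hard decision $\hat{e}_{v_r}=1$.

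First I would fix notation: let $v_r\in V_i$ be the erroneous root, let $c_1,\dots,c_{d_v}$ be its adjacent checks (all in the common neighbourhood $N(V_a)=N(V_b)$, since a QTS has no odd-degree checks), and let $w_1,\dots,w_{d_v}$ be the layer-$1$ qubit descendants of $v_r$ in $T(v_r)$, which by the isomorphic partition of Definition~\ref{def:quantum trapping sets} all lie in the opposite subset. By Lemma~\ref{cor:cor-no-check-rem} the separation-limiting set $\mathcal{R}$ consists precisely of, for each $w_j$, the $d_v-1$ layer-$2$ child checks of $w_j$ whose further descendant re-enters $V_i$. The key observation is then that for every $j$ these are \emph{all} of $w_j$'s adjacent checks except its parent $c_j$; hence in the Tanner graph with $\mathcal{R}$ deleted, each $w_j$ has degree one. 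Consequently every path in $T_{\mathcal{K}}(v_r)$ that begins $v_r\to c_j\to w_j\to\cdots$ dies after a single step, so no descendant of $v_r$ lies in $V_i$ at any level. This simultaneously checks the hypothesis of Definition~\ref{def:quantum-separation} — each $c_j$ is degree-one with respect to $V_i$ and carries no further $V_i$-qubit below it — so $v_r$ is $\mathcal{K}$-separated for every $\mathcal{K}$. Since the oscillation that made the QTS harmful was exactly the symmetry between the twin errors $e_a$ and $e_b$ propagated along these now-severed routes (cf. the discussion preceding Lemma~\ref{lem:qts}), the computation tree of $v_r$ is effectively tree-like in the relevant region, and the separation-correctness principle yields that \textit{min-sum} BP corrects $v_r$.

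The main obstacle I expect is the step ``no descendant of $v_r$ lies in $V_i$ at any level'': what really needs ruling out is a \emph{longer} path in the modified computation tree that reconnects $v_r$ to $V_i$ through qubits lying outside the QTS or through the surviving parent checks $c_j$. I would handle this by noting that the decoder has a finite separation threshold $\mathcal{K}^{\star}$ (the critical-number regime, cf. Lemma~\ref{lem:qts} and Theorem~1 of \cite{zhang2013quantized}), so it suffices to keep $v_r$ separated only up to depth $\mathcal{K}^{\star}$; within that depth the only reconnections to $V_i$ are the layer-$2$ ones enumerated in Lemma~\ref{cor:cor-no-check-rem}, all deleted, and the degree reduction of the $w_j$ blocks any new reconnection appearing one level deeper. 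A secondary point to verify is that discarding the stabilizer checks in $\mathcal{R}$ (equivalently ignoring those syndrome bits) leaves a syndrome on the surviving checks still consistent with the true error as seen by $v_r$; this follows because $\mathcal{R}\subset N(V_a)=N(V_b)$ meets the trapping set symmetrically, so no new inconsistency is introduced for the decoder acting on $v_r$.
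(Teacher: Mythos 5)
Your argument is correct in substance but follows a genuinely different route from the paper's. The paper proves this lemma by a direct sign-tracking analysis of the \textit{min-sum} messages: working bottom-up in $T(v_r)$ under a phenomenological bit-flip channel with equal channel magnitudes, it shows that each of the $d_v(d_v-1)$ offending checks sends a message to its parent twin-subset qubit $v_i$ with $\mathrm{sign}(m_{c_j\to v_i})\neq\mathrm{sign}(\lambda_i)$, which corrupts $m_{v_i\to c_r}$ and hence the net message delivered to $v_r$; deleting those checks removes the sign corruption, so correct information reaches the root. You instead argue structurally: since Lemma~\ref{cor:cor-no-check-rem} identifies the removed set as \emph{all} non-parent checks of each layer-$1$ twin qubit $w_j$, deletion leaves each $w_j$ with degree one, the separation of $v_r$ exceeds any finite threshold, and the classical separation-correctness principle (Proposition~2 of \cite{kang2015breaking} / Theorem~1 of \cite{zhang2013quantized}) then forces the correct hard decision. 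Your route is more modular and ties the lemma cleanly to the qubit-separation framework the paper builds, but it outsources the decoding-dynamics step to an external theorem whose transfer to the quantum syndrome-based \textit{min-sum} setting is asserted rather than shown, and it must (as you note) separately rule out reconnections to $V_i$ through non-QTS qubits and argue that discarding the corresponding syndrome bits keeps the residual syndrome consistent; the paper's proof avoids both issues by never leaving the message-update equations, at the cost of being specific to \textit{min-sum} under the symmetric-LLR noise assumption. Both arguments rest on the same structural input from Lemma~\ref{cor:cor-no-check-rem}, and both are at a comparable level of rigor, so I would accept yours as an alternative proof while flagging that the invocation of the separation-correctness principle deserves an explicit statement of the hypotheses under which it applies here.
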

\begin{proof}
    To prove the above lemma, we analyze the message received by one of the erroneous data qubits in the QTS.
    The computation tree traversed in root-to-leaf order represents the exact sequence of messages passed by the min-sum BP decoder.
    Therefore, each iteration of the decoder indicates a level in the computation tree.
    % Therefore, we argue that the harmful error beliefs contributed by $c_6$ causes the non-convergence of $v_0$.
    Now assume $v_r \in V_i$, where $V_i$ is one of the degenerate qubit subsets of the QTS and $c_r \in N(V_i)$ is one of the common stabilizer check nodes, which connects two different disjoint qubit subsets of the QTS.
    Also, $T(v_r)$ is the computation tree with $v_r$ as its root.
    We now observe the flow of information in $T(v_r)$ from leaf nodes towards the root node.
    For the \textit{min-sum} BP decoder to be able to correct erroneous $v_r$, correct information must be passed from $c_r$ to root $v_r$ in $T(v_r)$.
    For this to be done, all the information passed to $c_r$ from its descendants must be correct.
    This correct passage of information must satisfy the following equality for messages passed from any child node $v_i$ to the parent $c_r$,
    \begin{align}
        \mathrm{sign}(\lambda_i) = \mathrm{sign}(m_{v_i \rightarrow c_r}); \forall v_i \in N(c_r)\backslash v_r.
        \label{eq:correct-info-sign}
    \end{align}
    Eq. \eqref{eq:correct-info-sign} implies that, as information passed into $c_r$ is correct, the sign of all the messages passed into $c_r$ should be the same as that of the message passed into data qubit nodes $v_i$ from the channel.
    Now, assuming a phenomenological bit flip noise channel, the magnitude of the messages received by the data qubit nodes from the channel is equal.
    Therefore, for $d_v$ regular QLDPC code, the messages sent from any data qubit $v_i$ in the same level of $T(v_r)$ have equal magnitude and follow
    \begin{align}
        |m_{v_i \rightarrow c_j}| &= |\lambda_i| + \sum_{c_{j^\prime}:N(v_i)\backslash c_j}|m_{c_{j^{\prime}}\rightarrow v_i}|.
    \end{align}
    Therefore, if we consider the effect of the stabilizer check nodes, which belong to the common neighborhood of the disjoint subsets, at level $2$ of $T(v_r)$, we directly observe a violation of Eq. \eqref{eq:correct-info-sign}.
    This is because of the following condition
    \begin{align}
        \mathrm{sign}(m_{c_j \rightarrow v_i}) \neq \mathrm{sign}(\lambda_i),
        \label{eq:violate-info-sign}
    \end{align}
    where if $v_r \in V_i$, then $v_i \in V_j$; such that $V_i \cap V_j = \emptyset$.
    At the junction of level $1$ and level $2$, qubit $v_i$ is adjacent to $d_v - 1$ child check nodes, and all such nodes follow Eq. \eqref{eq:violate-info-sign}.
    Therefore, the check nodes at level $2$ of $T(v_r)$, which limit the separation of $v_r$, also pass incorrect information to check node $c_r$ adjacent to $v_r$, which leads to net incorrect information passed to $v_r$ by the \textit{min-sum} BP decoder.
    Therefore, removal of all such $d_v(d_v - 1)$ number of check nodes from level $2$ of $T(v_r)$ ensures correct information passed to the erroneous qubits and, as a result, ensures correctness of the erroneous qubits in the QTS.
\end{proof}
In table \ref{table:qts-882}, we validate the lemma \ref{cor:sep-imp-rem-chk} and observe that removal of those separation limiting $6$ stabilizer check nodes actively improves the \textit{min-sum} BP decoder's performance and ensures correction of the root erroneous trapped qubit.
\begin{table}[!ht]
    \centering
    {
    \renewcommand{\arraystretch}{1.75}
    \setlength{\tabcolsep}{3pt}
    \begin{tabular}{|c|c|c|}
        \hline
        \parbox[c]{4.3cm}{\centering Removed stabilizer check nodes} & \parbox[c]{1.35cm}{\centering Affected trapped qubit} & \parbox[c]{2.15cm}{\centering New syndrome bit predictions by \textit{min-sum} decoder}\\
        \hline
        \parbox[c]{4.3cm}{\raggedright $[c_{406},c_{352},c_{351},c_{405},c_{357},c_{411}]$} & \parbox[c]{1.35cm}{\centering $v_0$} & \parbox[c]{2.15cm}{\centering $[c_0, c_1, c_6]$}\\
        \hline
        \parbox[c]{4.3cm}{\raggedright $[c_{0},c_{405},c_{1},c_{406},c_{6},c_{411}]$} & \parbox[c]{1.35cm}{\centering $v_{351}$} & \parbox[c]{2.15cm}{\centering $[c_{351},c_{352},c_{357}]$}\\
        \hline
        \parbox[c]{4.3cm}{\raggedright $[c_{0},c_{351},c_{1},c_{352},c_{6},c_{357}]$} & \parbox[c]{1.35cm}{\centering $v_{405}$} & \parbox[c]{2.15cm}{\centering $[c_{405},c_{406},c_{411}]$}\\
        \hline
    \end{tabular}
    }
    \caption{Direct check node removal showing the positive impact on the prediction of erroneous trapped qubits by \textit{min-sum} BP decoder for a $(6,0)$ QTS.}
    \label{table:qts-882}
\end{table}
In the following section, we first discuss the classical results from Zhang \emph{et al.} and elaborate on the limitations imposed by the probabilistic decoding algorithm of check node removal.
Next, we propose incorporating the information measurement method \cite{xu2018iterative} as a subroutine to guide a more targeted removal of stabilizer check nodes, thereby improving the accuracy of the min-sum message-passing decoder.

\section{Improving the min-sum decoder using stabilizer Check Node Removal}
\label{sec:collab-decoding}
As discussed in the previous section, removing the check nodes from the computation tree of an erroneous trapped node increases its separation and aids its correction by the min-sum BP decoding algorithm.
In this section, we describe the complete decoding algorithm and a collaborative decoding pipeline, inspired by \cite{kang2015breaking}, resulting in a low-complexity decoding scheme.
\subsection{Selection of the Computation-Tree Level for Check-Node Removal}
In the general settings, determining the exact level in the computation tree (CT) from which the check nodes should be removed requires prior knowledge of the trapping sets.
For example, from Figs. \ref{fig:computation-tree-3-3} and \ref{fig:ct-qts60}, the level $\mathcal{K}$ of the computation tree containing the removable stabilizer check nodes for the root erroneous trapped qubit can be identified, since the trapping-set configurations are known from Fig. \ref{fig:trapping-sets-of-882-24}.
However, for general families of quantum codes, identifying trapping-set configurations is a nontrivial task. 
Therefore, it is essential to determine the computation-tree level at which candidate check nodes should be selected for removal without relying on prior knowledge of trapping sets.\\
In the computation tree of a trapped data qubit belonging to a quantum trapping set (QTS), level $2$ consistently contains data qubits from the complementary degenerate subset. This observation motivates the choice $\mathcal{K} = 2$ in $T(v_r)$, for $v_r \in V_a, V_b$, as the level from which stabilizer check nodes are selected for removal. 
In contrast, classical trapping sets (CTSs) do not exhibit such a symmetric structure. Consequently, in the absence of structural information about a CTS, it is generally not possible to preselect an appropriate computation-tree level for stabilizer check-node removal.
Kang \emph{et al.} proposed the concept of the weak check nodes, which relaxes the condition of removing the exact check node, which is responsible for limiting the separation of a trapped bit \cite{kang2015breaking}.
This weak check node in the computation tree of a classical bit refers to a check node in any level of the CT having the shortest distance to any of the trapped bits as a descendant of the root bit.
Therefore, choosing any level $\mathcal{K}$ of the computation tree and probabilistic removal of check nodes from the same level results in a separation extension of the trapped root bit and therefore allows the message passing BP decoder to correct the root erroneous bit.\\
By adopting the same approach for CTSs, stabilizer check nodes can be removed from level $\mathcal{K} = 2$ of the computation tree of any trapped qubit to increase the separation of the trapped qubits in any CTS. 
This provides a practical strategy for selecting the computation-tree level for stabilizer check-node removal in both CTS and QTS scenarios without requiring prior knowledge of their configurations.
During decoding, the decoder has no prior knowledge of the trapped qubits and relies solely on the syndrome, i.e., the violated stabilizer check nodes. Accordingly, we construct computation trees rooted at stabilizer check nodes. In this setting, stabilizer check nodes are removed from level $\mathcal{K} = 1$ of the computation tree $T(c_r)$, where $c_r$ is a violated stabilizer check node.
These observations highlight that the distinct structural characteristics of CTSs and QTSs necessitate different strategies for identifying harmful stabilizer check nodes.
Although we have argued that stabilizer check nodes can be removed from level $\mathcal{K} = 1$ of the computation tree rooted at a violated check node, the questions of how many such nodes should be removed and which specific nodes to select remain open. 
In the following section, we address these questions.

\subsection{Information-Measurement (IM) assisted stabilizer check node selection}
For classical trapping sets, Kang et al. \cite{kang2015breaking} proposed a probabilistic removal of check nodes that improves trapped-qubit separation with or without requiring prior knowledge of the underlying trapping-set configuration. Their approach removes check nodes either through random selection or by subjecting them to a Bernoulli trial.
For large QLDPC codes, a random, unassisted check node removal does not improve the min-sum decoder's performance.
% \textcolor{blue}{Might show some simulations of decoder comparison plots for IM added and non-added algorithm.}
We address this issue by incorporating an information-measurement subroutine that enables more selective identification of stabilizer check nodes that contribute to harmful error beliefs or limit the separation of the trapped qubits.
For each data and stabilizer check qubits we associate an information measurement (IM) value with it.
\begin{definition}[Information Measurement for data qubits]
    For each data qubit of the quantum code, the information measurement is defined by the total number of adjacent stabilizer checks that are unsatisfied. 
\end{definition}
i.e. 
\begin{align}
    \mathrm{IM}_{v_i} = \sum_{j}c_j, \forall c_j \in N(v_i), \text{if}\,\, c_j = 1.
    \label{eq:im-qubit}
\end{align}
\begin{definition}[Information Measurement for Stabilizer checks]
    For each stabilizer check, the information measurement is defined as the sum of the information measurement values of all its adjacent data qubits. 
\end{definition}
i.e.
\begin{align}
    \mathrm{IM}_{c_j} = \sum_{i}\mathrm{IM}_{v_i}, \forall v_i \in N(c_j).
    \label{eq:im-check}
\end{align}
We claim that the stabilizer check nodes that contribute to harmful error beliefs in any QTS have the highest IM values among those that contribute proper error beliefs.
We previously discussed how the common neighborhood of the degenerate subsets of a QTS limits the separation of the trapped qubits and, therefore, they are also responsible for passing incorrect error beliefs.
\begin{lemma}
    The stabilizer check nodes from level $\mathcal{K} = 2$ of the computation tree of an erroneous data qubit $v_r$ that belongs to a QTS, i.e., $v_r \in V_a \cup V_b$; always have the maximum IM values.
\end{lemma}
\begin{proof}
    The symmetry of the degenerate qubit subsets implies that any error configuration supported on the quantum trapping set will always lead to the same non-zero IM values for the data qubits belonging to the two degenerate subsets.
    For instance, let us assume the isolation assumption, where only an error pattern supported on the QTS exists.
    Also assume that for a $(a,0)$ QTS with critical number $\frac{a}{2}$, the error is of cardinality $\frac{a}{2}$ and is supported on one of the degenerate qubit subsets.
    In this scenario, all the check nodes from the common neighborhood of the degenerate subsets $V_a$ and $V_b$ are violated, i.e.
    \begin{align*}
        c_r = 1, \forall c_r \in N(V_a) = N(V_b).
    \end{align*}
    Further, in any QTS, there are no odd-degree stabilizer check nodes. For a $(d_v, d_c)$- regular QLDPC code, all $d_v$ check nodes adjacent to a trapped data qubit belong to the QTS.
    Therefore, from Eq. \ref{eq:im-qubit}, we get
    \begin{align*}
        \mathrm{IM}_{v_r} &= d_v, \forall v_r \in V_a \cup V_b.\\
        \mathrm{IM}_{v_{r^{\prime}}} &= 0, \forall v_{r^{\prime}} \notin V_a \cup V_b.
    \end{align*}
    The above equations along with the fact that any $c_r \in N(V_a) = N(V_b)$ has even degree directly imply
    \begin{align*}
        \mathrm{IM}_{c_r} &= 2d_v, \forall c_r \in N(V_a) = N(V_b),\\
        \mathrm{IM}_{c_{r^{\prime}}} &= 0, \forall c_{r^{\prime}} \notin N(V_a) = N(V_b).
    \end{align*}
    In level $\mathcal{K}=2$ of $T(v_r): v_r \in V_a \cup V_b$, there are always two sets of check nodes, $c_r \in N(V_a) = N(V_b)$ and $c_{r^{\prime}} \notin N(V_a) = N(V_b)$.
    Therefore, the above equations establish the fact that the stabilizer check nodes belonging to the common neighborhood of the trapped data qubits are the only check nodes with non-zero IM values with equal magnitude, and all the other nodes have zero IM value.
\end{proof}
The above lemma indicates that the stabilizer check qubits from the common neighborhood of the degenerate subsets of the QTS have bigger IM values compared to any other stabilizer check qubits at $\mathcal{K} = 2$ of the computation tree with a trapped data qubit at the root.
Therefore, estimating the IM values improves the selectivity for removing stabilizer check nodes, leading to the accurate removal of harmful error beliefs and improving the separation of the trapped data qubits.
\subsection{A collaborative decoding pipeline}
In this section, we propose a collaborative decoding architecture that integrates the IM-assisted stabilizer check node removal from the computation tree of the min-sum decoder to mitigate trapping-set effects and improve the performance of the min-sum BP decoder.
We summarize a minimal version of classical check node removal algorithm, inspired by the work of Kang \emph{et al.} \cite{kang2015breaking} in Appendix \ref{ap:cnr}.
From the algorithm itself, we can observe that no measures have been taken to improve the identification of stabilizer nodes that contribute to harmful error beliefs and limit the separation of the trapped qubits.
We discussed in the previous section that this can be addressed by incorporating IM-based selection of stabilizer check nodes.
The function Find\_IMs evaluates the IM values for an input set of stabilizer check nodes.
Algorithm \ref{alg:find_im_opt} from Appendix \ref{ap:findim} describes the Find\_IMs function.\\
Using Algorithm \ref{alg:find_im_opt}, we propose Quantum Check Node Removal (QCNR) in Algorithm \ref{alg:qcnr}.
\begin{algorithm}
\caption{Quantum Check Node Removal (QCNR)}\label{alg:qcnr}
\SetKwFunction{KwFn}{Find\_IMs}
\SetKwInOut{Input}{input}\SetKwInOut{Output}{output}
\Input{$\mathbf{H}$, $\mathtt{UNSAT} $, $\mathtt{t}$, $\mathtt{df}$}
\Output{Modified parity check matrix $\mathbf{H}_{FN}$}
$\mathbf{H}_{FN} = \mathbf{H}$\tcp*[r]{Initialization}
$\mathtt{rem}$ = []\tcp*[r]{Initialize checks removal}
$\mathtt{im\_checks} = \KwFn(\mathtt{UNSAT})$\;
\For{$\mathtt{check}$ in $\mathtt{UNSAT}$}{
    $\mathtt{leaf}$ $\gets$ All the leaf nodes of $T_t(\mathtt{check})$\;
    % $\mathtt{leafval} \gets \{c: ims[c]\,\, \text{for}\,\, c \,\,\text{in}\,\, \mathtt{leaf}\}$\;
    $\mathtt{leafval} \gets \left[\, \mathtt{im\_checks}[\ell] \;\middle|\; \ell \in \mathtt{leaf} \,\right]$\;
    % $\mathtt{maxleaf} = \mathrm{MAX}_c(\mathtt{leafval})$\tcp*[r]{All the check nodes with maximum IM value}
    % $\mathtt{rem}$ $\circ$ $\mathtt{maxleaf}$\tcp*[r]{Extend the list $\mathtt{rem}$ by all the check nodes with highest IM.};
    $\mathtt{maxval} \gets \max(\mathtt{leafval})$\tcp*[r]{Maximum IM value}
    
    $\mathtt{maxleaf} \gets \{\, \ell \in \mathtt{leaf} \;|\; \mathtt{im\_checks}[\ell] = \mathtt{maxval} \,\}$\tcp*[r]{All check nodes attaining the maximum}
    
    $\mathtt{rem} \gets \mathtt{rem} \cup \mathtt{maxleaf}$
}
$\mathtt{dis}$ $\gets$ random.chose($\mathtt{rem}$,$\mathtt{df}$)\;
% \ForEach{$\mathtt{uc}$ $\in$ $\mathtt{dis}$}{
%     $\mathbf{H}_{FN} \gets \mathbf{H}_{FN} \backslash \mathbf{H}[\mathtt{uc}]$\tcp*[r]{check node removal}
% }
\ForEach{$c \in \mathtt{dis}$}{
    $\mathbf{H}_{FN} \gets \mathbf{H}_{FN}[\{1,\dots,m\} \setminus \{c\},\, :]$\tcp*[r]{Remove row $c$}
}
return $\mathbf{H_{FN}}$ \tcp*[r]{The modified matrix}
\end{algorithm}
We further use this QCNR in a collaborative setting to deduce a low complexity decoding architecture.
In the collaborative architecture, we use the min-sum decoder in two modes.
The main mode implements the \textit{min-sum} BP on the unmodified parity check matrix of the QLDPC code.
The other mode, which we call sub-decoding mode, carries out the min-sum algorithm on a modified parity check matrix, which is obtained by removing some stabilizer check nodes from the computation tree of the unsatisfied stabilizer check nodes, using Algorithm \ref{alg:qcnr}.
In previous sections we have seen that the computation tree embodies the message passing iterations of the min-sum decoder and therefore running the same decoder over a modified parity check matrix with certain rows removed essentially implements the message passing rules with the corresponding check nodes ignored.

We have also seen that the main decoding mode is limited in its performance.
Therefore, if, after a predetermined number of message-passing iterations of the min-sum algorithm, the decoder makes no further correct predictions, it switches to sub-decoding mode.
After the sub-decoding round completes, the decoder switches back to the main decoding mode.
We note that in both the main and sub-decoding rounds, we continuously update the list of net-unsatisfied stabilizer check nodes.
Therefore, calling the main decoding mode again after the completion of a sub-decoding mode is aimed at correcting any error that has not been addressed by the removal of stabilizer check nodes of the previous sub-decoding round, and also to tackle errors that are no longer supported over any TS after the previous main and sub-decoding round.
This approach is different than the post-processing based methods, as it switches its message-passing operations over two different parity check matrices and outputs a prediction only if all the errors are corrected, or a predetermined number of sub-decoding rounds have been completed.
\begin{algorithm}
\caption{Quantum Collaborative Check Node Removal (QCCNR) Algorithm}\label{alg:bp-qcnr}
\SetKwInOut{Input}{input}\SetKwInOut{Output}{output}
\Input{\begin{itemize}
    \item Syndrome:= $\mathtt{s}$ and pcm $\mathbf{H}$,
    \item Channel llr values:= $\mathtt{p}$,
    \item Maximum iterations of main BP:= $\mathtt{max_{iter}}$,
    \item Maximum iterations of sub BP:= $\mathtt{max_{sub}}$,
    \item Deselection degree: $\mathtt{df}$,
    \item Maximum sub-decoding rounds: $\mathtt{ns}$,
    \item Tolerance of BP iterations: $\mathtt{tol}$.
\end{itemize}}
\Output{Predicted error pattern $\mathtt{\hat{e}}$.}
% $\mathtt{unchanged = 0}$\;
% $\mathtt{itn} = 1$\tcp*[r]{Main decoder initialization}
% $\mathtt{s_{prev}:= s}$\;
\BlankLine\tcp{Initialization with the Main mode}
$\mathtt{\hat{e}_{bp}}, \mathtt{\hat{s}_{bp}} = \mathsf{MAIN\_BP}(\mathbf{H}, p, \mathtt{max_{iter}}, \mathtt{tol}, \mathtt{s})$\;
\BlankLine
\tcp{Sub-decoder initialization}
$\mathtt{e_{net}}:= \mathsf{zeros}\mathtt{((}\mathtt{ns},\mathbf{H}.\mathsf{shape}\mathtt{[1]}\mathtt{))}$\;
$\mathtt{s_{net}}:= \mathsf{zeros}\mathtt{((}\mathtt{ns},\mathbf{H}.\mathsf{shape}\mathtt{[0]}\mathtt{))}$\;
\While{$\mathtt{i < ns}$}{
   $\mathtt{s_{fn} \gets s + \hat{s}_{bp} + (\sum_{j}s_{net}[j]), \forall j < i}$\;
    $\mathtt{UNSAT} = \mathrm{supp}(s_{fn})$\;
        
   $\mathbf{H}_{FN} = \mathsf{QCNR}(\mathbf{H},\mathtt{UNSAT},\mathtt{t} = 1,\mathtt{df})$\;
   % $\mathsf{BP_{sub}} = \mathsf{BP}(\mathbf{H}_{FN},\mathtt{llr},\mathtt{max_{sub}})$\;
   \BlankLine\tcp{Sub-decoding mode}
   $\mathtt{\hat{e}_{sub}}, \mathtt{s_{sub}} = \mathsf{SUB\_BP}(\mathbf{H}_{FN}, p, \mathtt{max_{sub}}, \mathtt{tol}, \mathtt{s_{fn})})$\;
   % $\mathtt{s_{sub} \gets \hat{e}_{sub}}\mathbf{H}^{T}$\;
   % $\mathsf{BP_{main}} = \mathsf{BP}(\mathbf{H}, \mathtt{llr}, \mathtt{max_{iter}})$\;
    % $\mathtt{\hat{e}^{\prime}_{bp}}$ = $\mathsf{BP_{main}}$.$\mathsf{decode}$($\mathtt{s_{sub} + s_{fn}}$)\tcp*[r]{Main-decoding mode}\;
    \BlankLine\tcp{Switching to Main decoding mode}
    $\mathtt{\hat{e}^{\prime}_{bp}}, \mathtt{\hat{s}^{\prime}_{bp}} = \mathsf{MAIN\_BP}(\mathbf{H}, p, \mathtt{max_{iter}}, \mathtt{tol}, \mathtt{s_{sub}} + \mathtt{s_{fn}})$\;
    % $\mathtt{\hat{s}^{\prime}_{bp}}$ = $\mathtt{\hat{e}^{\prime}_{bp}}\mathbf{H}^T$\;
    $\mathtt{e_{net}}[i] \gets \mathtt{e_{sub}} + \mathtt{\hat{e}^{\prime}_{bp}}$\;
    $\mathtt{s_{net}}[i] \gets \mathtt{s_{sub}} + \mathtt{\hat{s}^{\prime}_{bp}}$\; 
   \If{$\mathrm{supp}(\mathtt{s_{fn}} + \mathtt{s_{net}}[i]) = \emptyset$}{
    return $\mathtt{\hat{e} \gets \hat{e}_{bp} + \sum_{m: m \leq i}\mathtt{e_{net}}[m]}$\;
    \textbf{End}\;
   }
   $\mathtt{i} \gets \mathtt{i} + \mathtt{1}$
}
return $\mathtt{\hat{e} \gets \hat{e}_{bp} + \sum_{i}\hat{e}_{net}[i]}$    
\end{algorithm}
A detailed set of steps for this collaborative decoding is described in Algorithm \ref{alg:bp-qcnr}.\\\\
In Fig. \ref{fig:break-stuck-bp}, we show the advantage of using the min-sum algorithm in the \textit{sub}-decoding mode.
We observe, out of numerous sampled instances, a case where the min-sum algorithm gets stuck while decoding the syndrome of an error pattern generated at an independent bit flip physical error rate of $p = 0.03$ for the $[[882,24]]$ GHP code after a few iterations.
We note that for all the numerical experiments of GHP codes, we use the following parameter values for the BP decoding: maximum-BP-iterations = $100$, BP-method = ``minimum-sum'', scaling-factor = $0.625$ \cite{emran2014simplified}.
These parameters are relevant to the message passing of the \textit{min-sum} BP in both the main and sub-decoding modes.
The stuck scenario indicates that the main decoder cannot identify any additional correct stabilizer check violations.
After the main mode of BP decoder fails to correct any more errors for a consecutive $11$ rounds, we iteratively apply Algorithm \ref{alg:qcnr} and run the BP decoder on a modified parity check matrix.
We keep track of the new corrections at each round of the sub-decoding.
The tail of the plot indicates that by activating the \textit{sub}-decoding mode, new stabilizer violations are predicted correctly and thus, eventually, all the errors are corrected.
This numerically confirms that removing stabilizer check nodes using Algorithm \ref{alg:qcnr} increases the efficacy of the decoding output.
\begin{figure}[t]
  \raggedright
  \includegraphics[width=0.45\textwidth]{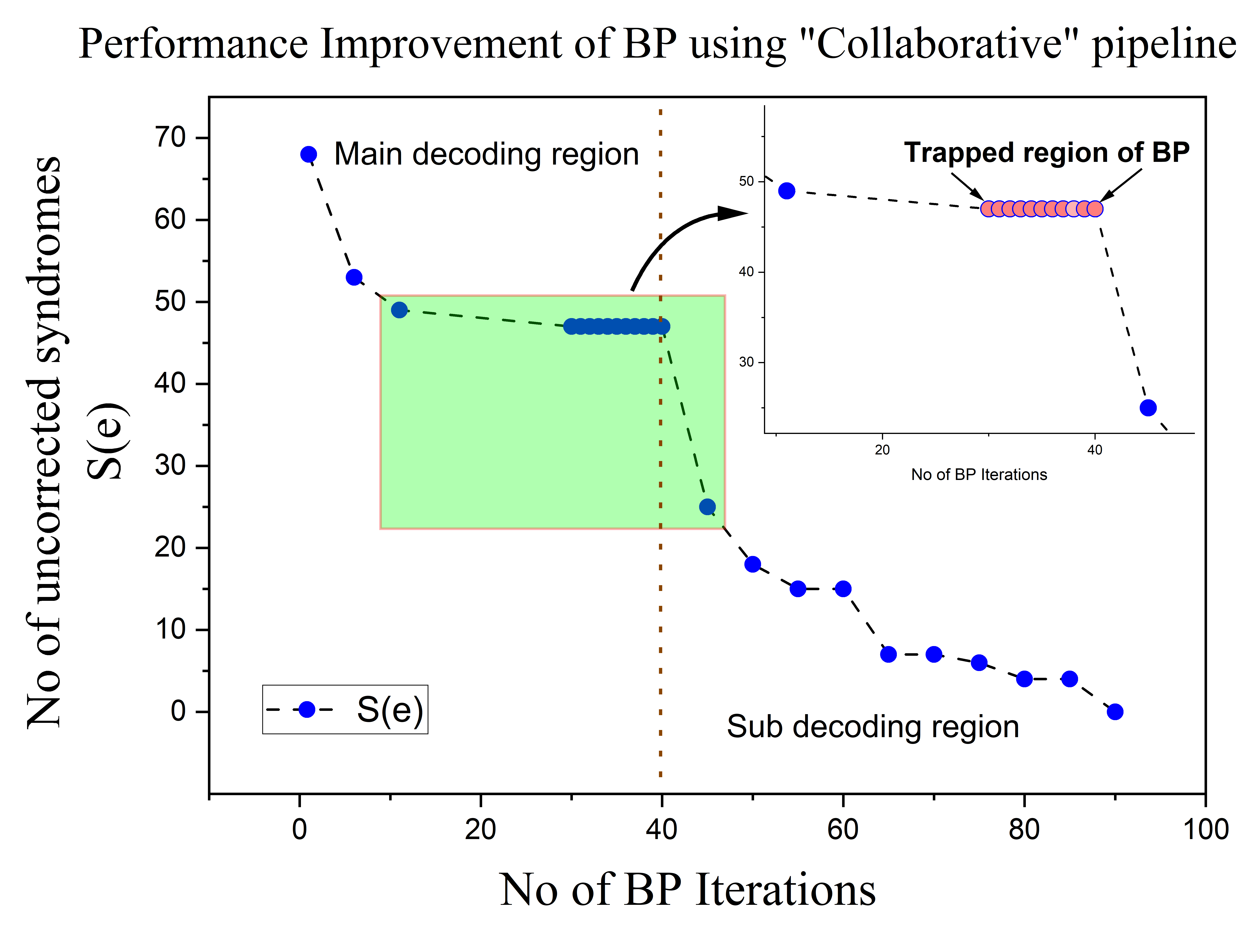}

  \caption{Numerical experiment showing the \textit{sub}-decoder's ability to identify stabilizer violations correctly, which were not possible by decoding with \textit{min-sum} based BP alone (which we denote as `main decoding region').
  We assume that the error occurs on the data qubits after each round of an error-correction cycle, and that the syndrome measurements are perfect.
  Under this code-capacity noise model, we sample an error pattern at an independent bit flip physical error rate of $p = 0.03$ for the $[[882,24]]$ GHP code.
  The \textit{min-sum} BP decoder cannot predict any more correct stabilizer violations for $11$ consecutive iterations and gets stuck at the $40^{th}$ decoding iteration.
  Subsequently, we initialize the sub-decoding mode.
  We call Algorithm \ref{alg:qcnr} repeatedly, and \textit{min-sum} BP is applied on the output parity check matrix of the QCNR algorithm.
  To tackle all the trapped cases for QTS and CTS, we set the deselection degree to $\mathtt{df} = 6$ for the first few rounds and then set $\mathtt{df} = 1$ for the last $20$ sub-decoding rounds.
  We observe repeated calls of Algorithm \ref{alg:qcnr} leading to an improvement in the performance of the min-sum algorithm in the \textit{sub}-decoding region. 
  The tail of the plot signifies newly accurate predictions of the violated stabilizer checks, indicating the improvement in performance.}
  \label{fig:break-stuck-bp}
\end{figure}
\\\\
We perform memory experiments to evaluate the performance of the proposed QCCNR decoder in Algorithm \ref{alg:bp-qcnr}, comparing it with the \textit{min-sum} BP decoder and the state-of-the-art BP+OSD decoder with OSD order set to $0$. In these experiments, errors on the GHP code are decoded under a code-capacity bit-flip noise model. The maximum number of sub-decoding rounds is set to $200$, and a halting condition is imposed such that the decoder terminates once all violated syndromes are corrected, or the maximum number of sub-decoding rounds is reached, as specified in Algorithm \ref{alg:bp-qcnr}.
Now, we have adopted a scenario where in the first $100$ sub-decoding rounds we set the deselection degree to $\mathtt{df} = 6$, and for the rest of $100$ rounds we set $\mathtt{df} = 1$.
This strategy is adopted deliberately to first target the errors supported on any QTS and then to target the errors supported on the CTS.
The GHP code family used in the experiments has a data qubit degree $d_v = 3$.
Therefore, from Lemma \ref{cor:cor-no-check-rem}, we set the deselection degree to $\mathtt{df} = d_v(d_v - 1) = 6$ to increase the separation of the trapped qubits inside the QTS.
In Fig \ref{fig:decoder-comp-ghp}, we show the results, which indicate that the proposed decoding scheme offers significant improvements in the logical error rates over the min-sum BP decoder, indicating a potential breakthrough to circumvent the harmful configurations for the iterative min-sum BP.
Algorithm \ref{alg:bp-qcnr} gives logical error rates much better than standard min-sum BP and almost achieves a performance comparable to the OSD$0$ post-processor.
\begin{figure}[ht!]
  \centering
  \subfloat[]{  \includegraphics[width=0.45\textwidth]{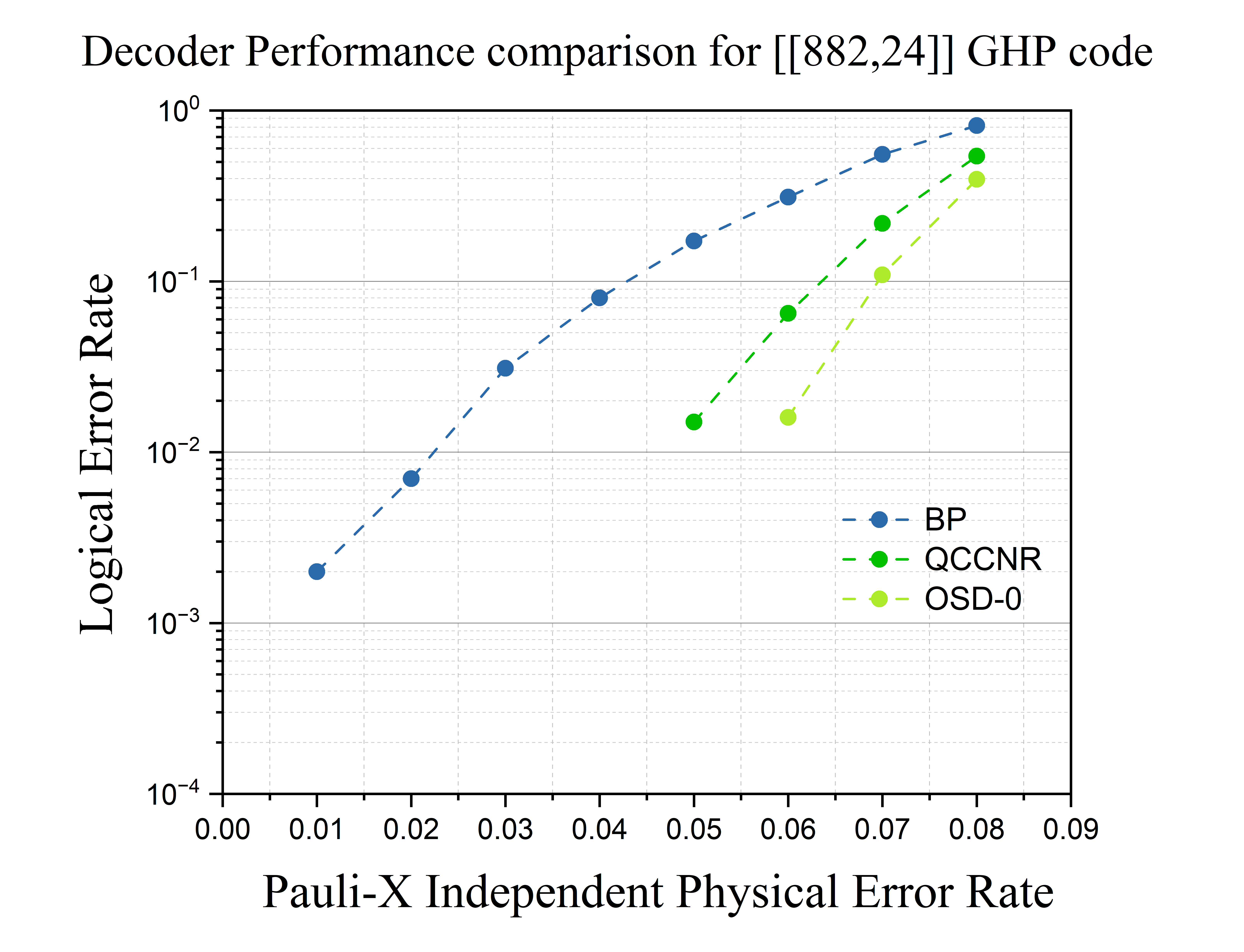}
      \label{fig:882-decoder-comp}
      }
    \hfill
    \subfloat[]{\includegraphics[width=0.45\textwidth]{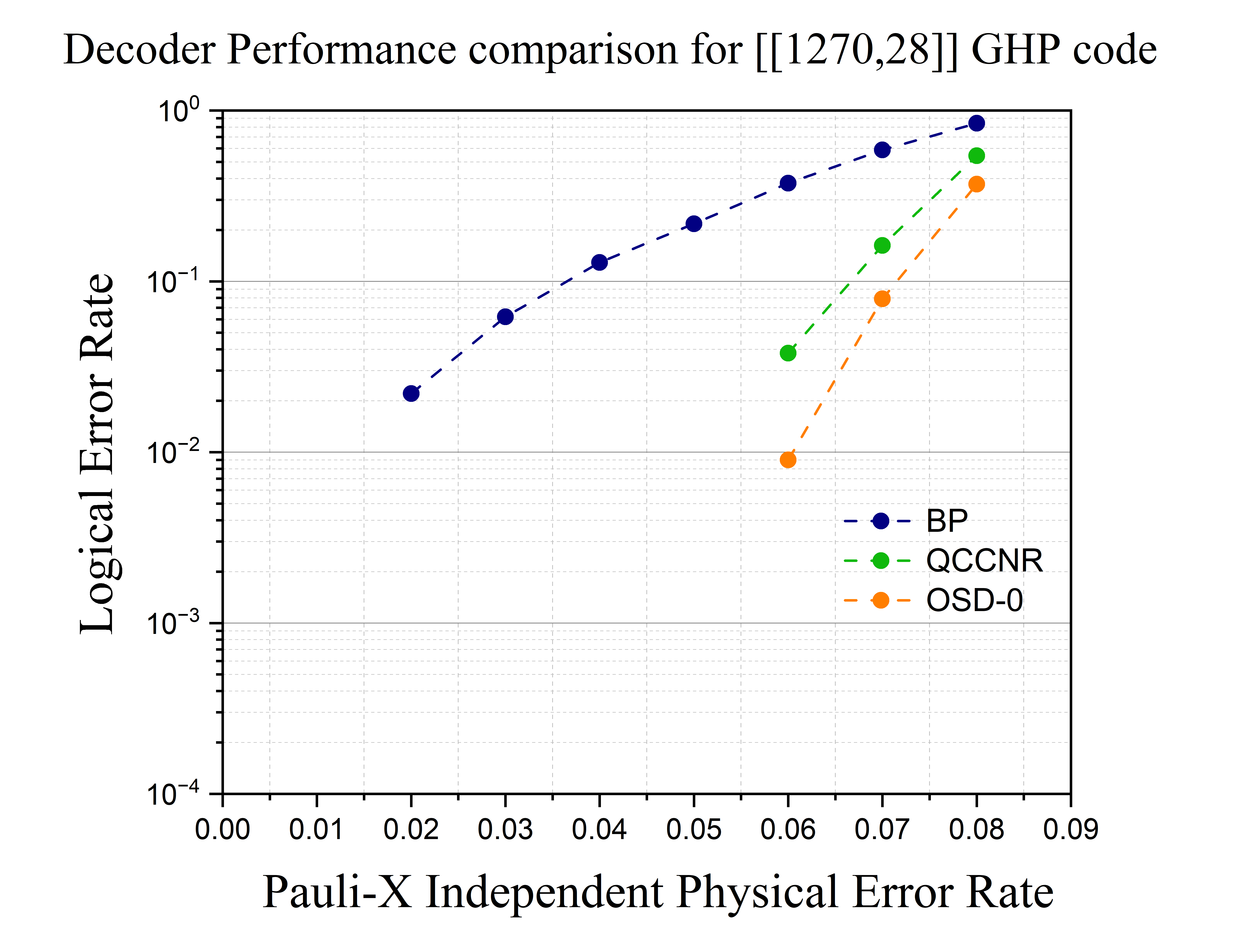}
      \label{fig:1270-decoder-comp}
      }
      % }
  \caption{The logical error rates obtained for the GHP codes from Ref. \cite{panteleev2021degenerate}, under the independent Pauli-$\mathrm{X}$ noise channel.
  We perform $10^{6}$ Monte Carlo code capacity LER simulations for both cases.
  In (a) we benchmark with the $[[882,24]]$ GHP code and in (b) we benchmark the $[[1270,28]]$ GHP code.
  In both the experiments, we use the \textit{min-sum} algorithm of BP, and for benchmarking against the OSD decoder, we use the the $0^{\text{th}}$ order OSD, i.e., BP+OSD$0$.
  For the QCCNR algorithm, we use the following parameters for Algorithm \ref{alg:bp-qcnr}: $\mathtt{max_{iter}} = \mathtt{max_{sub}} = 100, \mathtt{ns} = 200$.
  Also, for the first $100$ sub-decoding rounds, the deselection degree is set as $\mathtt{df} = 6$.
  This is set from Lemma \ref{cor:sep-imp-rem-chk} to support qubits trapped inside QTS.
  Further, for the remaining $100$ sub-decoding rounds, the deselection degree is set to $\mathtt{df} = 1$, in support of the qubits trapped inside CTS.
  We observe that the proposed QCCNR algorithm provides OSD-like improvements and obviously is far superior to the standard \textit{min-sum} algorithm-based BP decoder.}
  \label{fig:decoder-comp-ghp}
\end{figure}
\subsection{Time complexity}
The collaborative decoder in Algorithm \ref{alg:bp-qcnr} has two modes, as discussed previously.
In both main and sub-decoding mode, the message passing of the min-sum BP decoder is linear, i.e., a $\Theta(n)$ process.
The sub-decoding mode, i.e. Algorithm \ref{alg:qcnr} has two parts: one is the construction of computation trees for each of the unsatisfied check nodes, and the other is finding the IM values of the stabilizer checks.
Given a QLDPC code, we compute the graph adjacency once and it is a $\approx\Theta(n^2)$ process.
The output of the graph adjacency is a list $\mathtt{nc}$ storing the neighborhood of each check node and another list $\mathtt{nv}$ for the neighborhood of the data qubits.
The adjacency lists are further used in Algorithm \ref{alg:find_im_opt} and Algorithm \ref{alg:layer1_tree}.\\\\
Algorithm \ref{alg:find_im_opt}, which computes the IM values has a worst case time complexity of $\Theta(||\mathtt{UNSAT}||)$.
Essentially, for each round of the sub-decoding, we only require the IM values for a constant number of leaf check nodes, in the computation tree of each of the remaining unsatisfied check nodes.
For GHP codes this time amounts to $\Theta(n\log{n})$ \cite{panteleev2021quantum}.
The other process involved in Algorithm \ref{alg:qcnr} is the construction of a level $1$ computation tree for each of the remaining unsatisfied check nodes.
This process, shown in Algorithm \ref{alg:layer1_tree} can be performed in constant time $\Theta(d_cd_v)$, where the underlying QLDPC is $(d_v,d_c)$ regular.
% If the QLDPC code is $(d_v, d_c)$ regular, constructing a level $1$ computation tree from any particular root check node requires first adding the $d_c$ number of data qubits and then for each of those data qubits adding $d_v$ number of stabilizer check nodes.
% This whole process is a $\mathcal{O}(n)$ process, 
Also $d_v, d_c << n$ and are constants for a given family of QLDPC code.
% Therefore, the overall time cost of the decoder is dependent on the maximum number of unsatisfied stabilizers as indicated by the syndrome.
Therefore, the overall time cost of the decoder is $\approx \Theta(n^2)$, which suggests that QCCNR stands as an excellent strategy for improving the iterative decoder's performance compared to decoders like BP+OSD0 or the higher-order alternatives.
% For GHP codes, we approximate that this will result in a sub-quadratic decoder.
% In any case, if $|\mathtt{UNSAT}|$ grows linearly, then the decoder is quadratic.

\section{Conclusion and Future Work}
In this work, we propose a collaborative decoding approach for improving the iterative min-sum decoding of QLDPC codes through improvements in \textit{qubit separation} for the trapped qubits, with a special focus on GHP codes.
We propose a new way to measure the separations of the qubits trapped inside a symmetric stabilizer set.
We used this method to go beyond the analysis of qubit separation of the classical trapping sets.
Our decoding architecture is free of any post-processing method and can be considered as a two-mode decoder with the message passing operations of the min-sum algorithm switches over the complete and a modified Tanner graph (or parity check matrix) of the quantum code.
% The overall decoder's complexity depends on the support of the syndrome (an implicit dependence on the physical error rate and properties of the code).
% For GHP codes, it approximately renders a sub-quadratic cost, which, in light of the performance-cost tradeoff, emerges as a practical alternative to the OSD-based post-processing decoders.
The overall decoding takes approximately $\Theta(n^2)$ time.
Despite being a post-processing-free decoding, the success of the algorithm has a significant improvement compared to the standard iterative \textit{min-sum} decoder.\\\
We would like to conjecture that the the current QCNR algorithm, which selectively removes stabilizer check nodes to improve the trapped qubit separations can not address point-like defects in codes like surface code.
Further we have discussed that the check node removal can prevent harmful error beliefs from passing while the message-passing iterations are going on.
Although we observed cases where certain check nodes, when removed can have a negative effect on a long error chain covering data qubit nodes contributed from both the circulant matrices used in the construction of the GHP code.
The second round of main decoding mode assists in mitigating such negative effects but not all.
These cumulatively prevents our decoder to achieve the OSD-$0$ success rates.
The improved \textit{qubit separation} can assist only in breaking the trapping set scenarios and is incapable of assisting the iterative decoders when point-like syndrome or long error chains with the above particular structure occurs.
% Therefore, the slight shift in the decoding performance between QCCNR and BP+OSD$0$ is due to such point-like syndromes, which might be more reflected in experiments involving small codes like the surface codes.

% In future work, we will investigate certain improvements to the QCCNR decoder to address these limitations.
% For instance recent works on sequential message passing algorithms shows that designing effective message passing schedules can help improve the overall iterative decoder perform better \cite{casado2007informed, moradi2026sequential}.
% It will be interesting to understand if along with stabilizer check removal better message passing schedules can help mitigate the negative predictions of the sub-decoding mode.
% We also plan to explore the ensemble decoding approach \cite{muller2025improved, koutsioumpas2025automorphism} in the context of QCCNR acting over irregular QLDPC codes.
% The ensemble decoding parallelizes the the QCNR subroutine via running different check node removal instances for various deselection degree $\mathtt{df}$.
% For irregular QLDPC codes as we are essentially bind sighted about $\mathtt{df}$, the ensemble decoding seems a direct choice of addition in the QCCNR pipeline.
% Nonetheless this work proposes a very simple decoding architecture, which shows significant improvements over the standard min-sum BP algorithm and is cost efficient.
% Also, the collaborative pipeline of QCCNR can be adapted efficiently for practical implementation of the decoding algorithm.
% scenarios and further enhance the collaborative architecture's performance.
In future work, we will investigate improvements to the QCCNR decoder to address these limitations. 
% In particular, recent studies on sequential message-passing algorithms suggest that carefully designed update schedules can enhance the performance of iterative decoders \cite{casado2007informed, moradi2026sequential}. 
In particular, it will be interesting to examine whether recent studies on optimized message-passing schedules \cite{casado2007informed, moradi2026sequential}, combined with stabilizer check-node removal, can mitigate erroneous predictions arising in the sub-decoding mode.
We also plan to explore ensemble decoding strategies \cite{muller2025improved, koutsioumpas2025automorphism} in the context of QCCNR, to provide a systematic way to address the lack of prior knowledge of optimal parameter choices for irregular QLDPC codes.
% In this setting, multiple instances of the QCNR subroutine can be executed in parallel with different deselection degrees $\mathtt{df}$, providing a systematic way to address the lack of prior knowledge of optimal parameter choices.

To summarize, despite its simplicity, the proposed decoding architecture demonstrates significant performance improvements over the standard min-sum BP algorithm while maintaining low computational overhead. 
% Moreover, the collaborative QCCNR pipeline is well suited for efficient practical implementation.

\section*{Data Availability}
The data and scripts used for the numerical experiments are available from the authors on reasonable requests.

\begin{acknowledgments}
M.B. acknowledges the use of the \href{https://github.com/quantumgizmos/ldpc_v2}{ldpc-repository} developed by J. Roffe \cite{Roffe_LDPC_Python_tools_2022}, which provided the BP and BP-OSD decoders \cite{roffe_decoding_2020} employed in the memory experiments. 
M.B. and A.R. thank IISER Bhopal for providing the computational resources required for the exhaustive simulations. 
M.B. acknowledges support from the doctoral research fellowship of IISER Bhopal.
A.R. acknowledges funding support from the Department of Science and Technology (DST), Government of India, under the National Quantum Mission (NQM), DST/QTC/NQM/QComm/2024/2.
A.R. is thankful for the grant received from the U.S.—India Science and Technology Endowment Fund (USISTEF), USISTEF/QT/165/2023. 
\end{acknowledgments}

\appendix

\section{GHP and GB codes used for analysis}
\label{ap:codes}
The GHP and GB codes used during the simulations and trapping set-related analysis are from Ref. \cite{panteleev2021degenerate}.
The GHP code construction follows Eq. \ref{eq:pcm-ghp-code}.
The matrices $A \in \mathcal{M}_n(R)$ are given in a polynomial form.
Each polynomial represents a unique $L \times L$ circulant matrix.
These details are as follows \cite{panteleev2021degenerate}:
\begin{enumerate}
    \item $[[882,24]]$ GHP code:
        \begin{align}
            \label{eq:882-24-ghp}
            A &= \begin{pmatrix}
                x^{27} & 0 & 0 & 0 & 0 & 1 & x^{54}\\
                x^{54} & x^{27} & 0 & 0 & 0 & 0 & 1\\
                1 & x^{54} & x^{27} & 0 & 0 & 0 & 0\\
                0 & 1 & x^{54} & x^{27} & 0 & 0 & 0\\
                0 & 0 & 1 & x^{54} & x^{27} & 0 & 0\\
                0 & 0 & 0 & 1 & x^{54} & x^{27} & 0\\
                0 & 0 & 0 & 0 & 1 & x^{54} & x^{27}
            \end{pmatrix},\\
            bI_m &= \begin{pmatrix}
                1 + x + x^6
            \end{pmatrix}I_7,\\
            L &= 63,\\
            (d_v, d_c) & = (3,6)
        \end{align}
    \item $[[1270,28]]$ GHP code:
        \begin{align}
            \label{eq:1270-28-ghp}
            A &= \begin{pmatrix}
                1 & 0 & x^{51} & x^{52} & 0\\
                0 & 1 & 0 & x^{111} & x^{20}\\
                1 & 0 & x^{98} & 0 & x^{122}\\
                1 & x^{80} & 0 & x^{119} & 0\\
                0 & 1 & x^5 & 0 & x^{106}
            \end{pmatrix},\\
            bI_m &= \begin{pmatrix}
                1 + x + x^7
            \end{pmatrix}I_5,\\
            L &= 127,\\
            (d_v, d_c) & = (3,6)
        \end{align}
\end{enumerate}

The GB code construction follows Eq. \ref{eq:pcm-gb-code}.
Elaborate detailing of the GB codes used in the threshold estimation is sourced from Ref. \cite{J_D_Crest_qldpc_codes} and as follows:
\begin{enumerate}
    \item $[[126,12,d<11]]$ GB code:
    \begin{align}
        a &= \begin{pmatrix}
                1 + x^{43} + x^{37}
            \end{pmatrix},\\
        b &= \begin{pmatrix}
                1 + x^{59} + x^{31}
            \end{pmatrix},\\
            L &= 63,
    \end{align}
    \item $[[254,14,d<17]]$ GB code:
    \begin{align}
        a &= \begin{pmatrix}
                1 + x^{18} + x^{53}
            \end{pmatrix},\\
        b &= \begin{pmatrix}
                1 + x^{12} + x^{125}
            \end{pmatrix},\\
            L &= 127,
    \end{align}
    \item $[[510,16,d<19]]$ GB code:
    \begin{align}
        a &= \begin{pmatrix}
                1 + x^{250} + x^{133}
            \end{pmatrix},\\
        b &= \begin{pmatrix}
                1 + x^{41} + x^{157}
            \end{pmatrix},\\
            L &= 255,
    \end{align}
\end{enumerate}
\section{Threshold of Generalized Bicycle codes}
In this appendix, we report the observed code-capacity thresholds for the Generalized Bicycle (GB) codes from Appendix C of \cite{panteleev2021degenerate} under the depolarizing noise channel, using both the QCCNR and BP+OSD0 decoders. 
We consider the $[[126,12,d<11]]$, $[[254,14,d<17]]$, and $[[510,16,d<19]]$ codes in the memory experiments.
\begin{figure}[ht!]
    \centering
  \subfloat[]{  \includegraphics[width=0.45\textwidth]{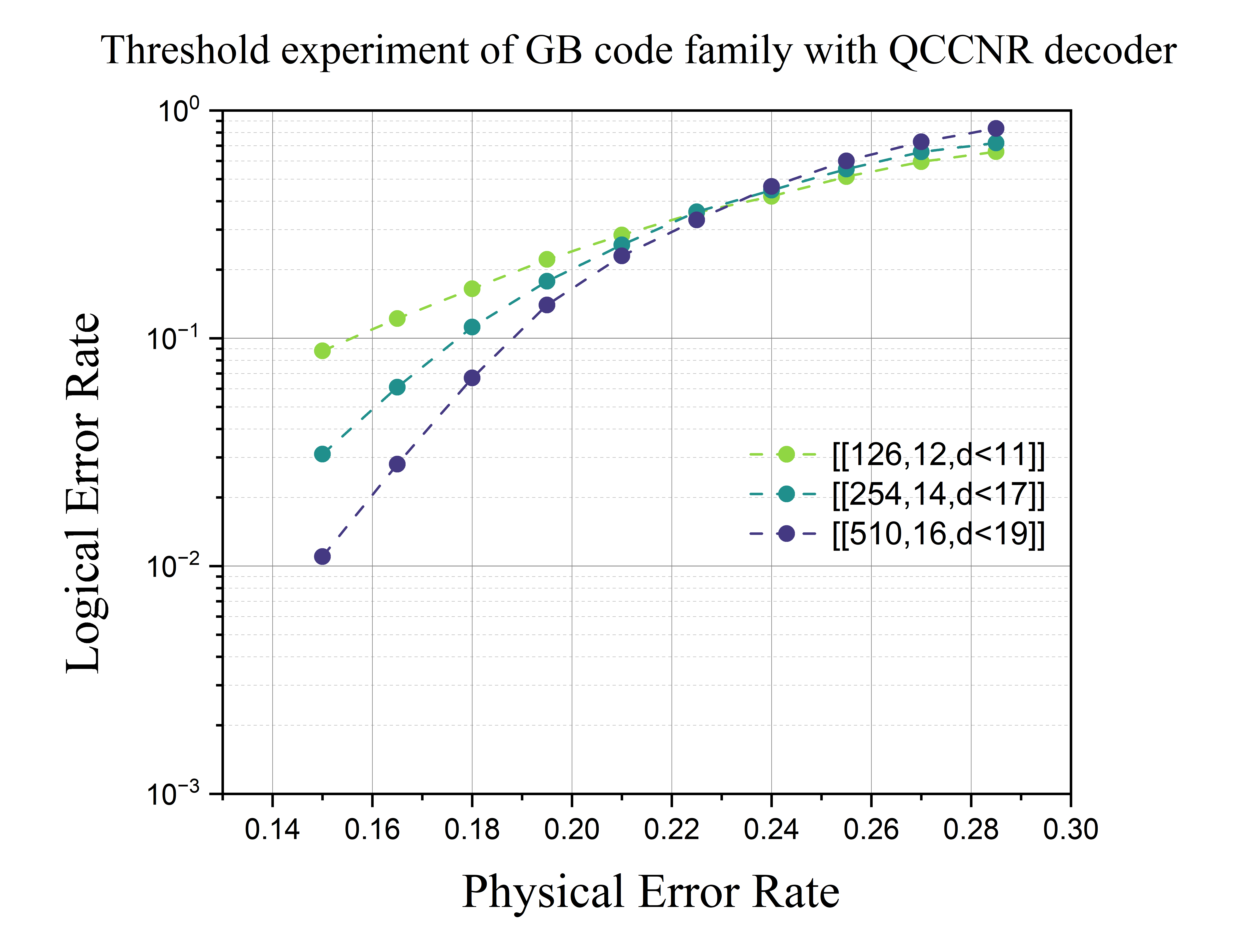}
          \label{fig:threshold-depo-gb-qccnr}
      }
    \hfil   
      \subfloat[]{\includegraphics[width=0.45\textwidth]{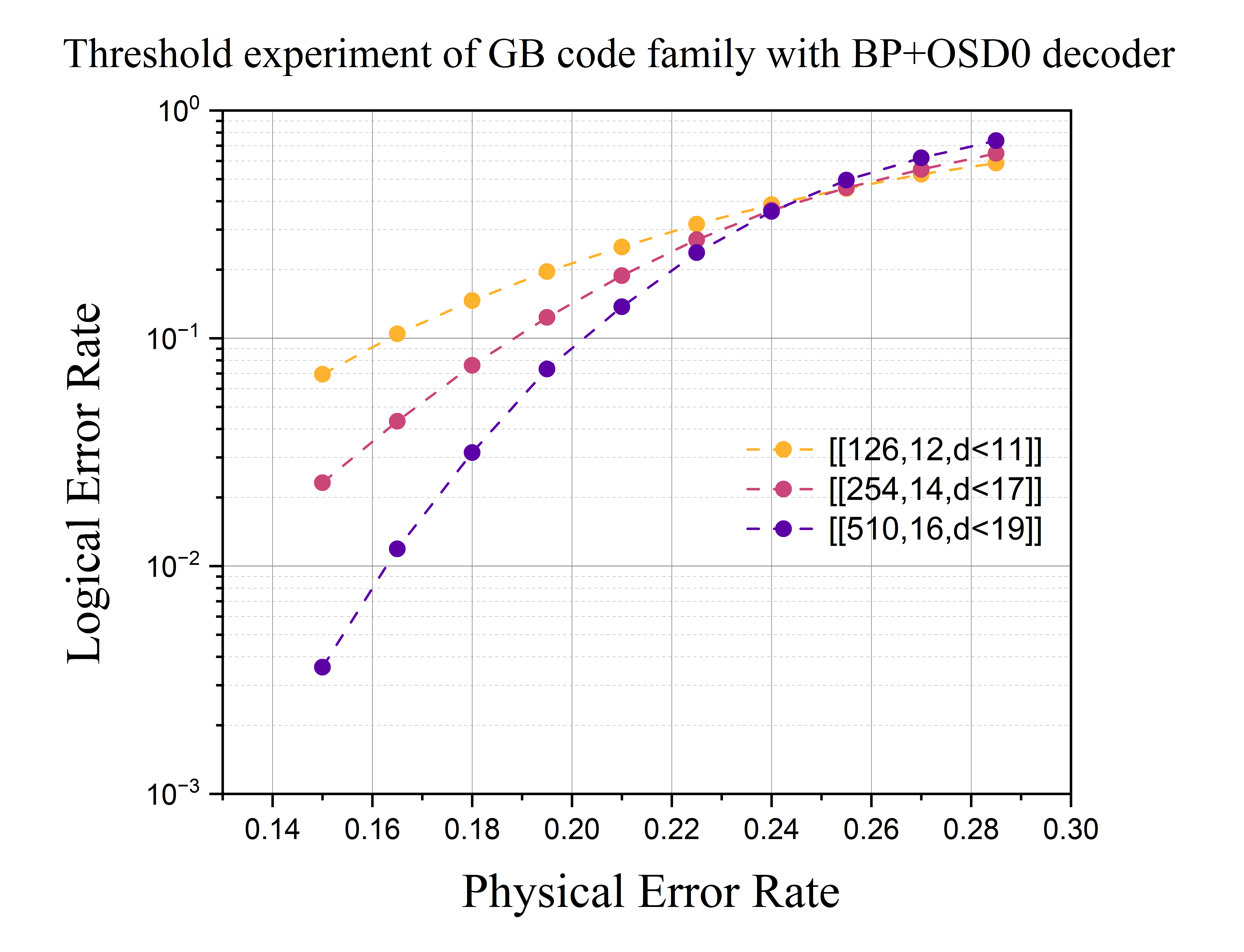}
          \label{fig:threshold-depo-gb-osd}
      }
    \caption{Code capacity threshold obtained for the GB codes under the depolarizing noise channel.
    In (a), we use the QCCNR decoder, and in (b), we use the BP+OSD$0$ decoder for decoding purposes.
    The observed thresholds for QCCNR and OSD$0$ are around $23\%$ and $24\%$, respectively.
    We observe a slightly reduced threshold and increased LERs for the former case.
    Although the results indicate the resolution of trapping sets scenarios under the decoding parameters used.
    }
    \label{fig:threshold-gb-depo}
\end{figure}
We observe a code-capacity threshold of $23\%$ under our proposed decoder.
\section{Probabilistic Check Node Removal algorithm}
\label{ap:cnr}
Algorithm \ref{alg:cnr} describes a stabilizer check-node removal procedure inspired by the work of Kang \emph{et al.} \cite{kang2015breaking}. 
The algorithm begins with the construction of computation trees rooted at each of the unsatisfied stabilizer check nodes. 
Subsequently, a predetermined number of stabilizer check nodes are selected at random from the set of all the leaf stabilizer check nodes at level $t$ of each computation tree.
This number is called the deselection degree $\mathtt{df}$ and is an essential tuning parameter to improve the separation of the trapped qubits from different types of trapping sets.
The CNR algorithm outputs a modified parity check matrix of the quantum code with some rows removed, which corresponds to the removed stabilizer check nodes.
\begin{algorithm}
\caption{Check Node Removal (CNR)}\label{alg:cnr}
\SetKwInOut{Input}{input}\SetKwInOut{Output}{output}
\Input{$\mathbf{H}$, $\mathtt{UNSAT} $, $\mathtt{t}$, $\mathtt{df}$}
\Output{Modified parity check matrix $\mathbf{H}_{FN}$}
$\mathbf{H}_{FN} = \mathbf{H}$\tcp*[r]{Initialization}
$\mathtt{rem}$ = []\tcp*[r]{Initialize possible leaf checks to remove}
\For{$\mathtt{check}$ in $\mathtt{UNSAT}$}{
    $\mathtt{leaf}$ $\gets$ All the leaf nodes of $T_t(\mathtt{check})$\;
    $\mathtt{rem}$ $\cup$ $\mathtt{leaf}$;
}
$\mathtt{dis}$ $\gets$ random.chose($\mathtt{rem}$,$\mathtt{df}$)\;
\ForEach{$c \in \mathtt{dis}$}{
    $\mathbf{H}_{FN} \gets \mathbf{H}_{FN}[\{1,\dots,m\} \setminus \{c\},\, :]$\tcp*[r]{Remove row indexed by check node $c$}
}
return $\mathbf{H_{FN}}$ \tcp*[r]{The modified matrix}
\end{algorithm}

\section{Find\_IMs algorithm}
\label{ap:findim}
In this appendix, we give the algorithm for the Find\_IMs function.
It consumes a set of unsatisfied stabilizer check nodes $\mathtt{UNSAT}$ as input and outputs a list of IM values for all the stabilizer check nodes of the underlying QLDPC codes.
The function also assumes a precomputed graph adjacency list of the neighborhood of the data and stabilizer checks $\mathtt{nv}$ and $\mathtt{nc}$ respectively.
% \begin{algorithm}
% \caption{Find\_IMs}\label{alg:find_im}
% \SetKwFunction{KwFn}{Find\_IMs}
% \SetKwInOut{Input}{input}\SetKwInOut{Output}{output}
% \Input{$\mathbf{H}$, $\mathtt{UNSAT} $, $\mathtt{list\_of\_checks}$}
% \Output{Dictionary with check node keys and its IM values}
% $\mathtt{im}\_q = \{\}$\tcp*[r]{Store data qubit IMs}
% $\mathtt{im}\_c = \{\}$\tcp*[r]{Store stabilizer IMs}
% \For{$\mathtt{c}$ in $\mathtt{list\_of\_checks}$}{
%     \For{$\mathtt{q}$ in $N(\mathtt{c})$}{
%         $\mathtt{im}\_q[\mathtt{q}] = |[\mathtt{cq}\,\, in\,\, N(\mathtt{q}) : \mathtt{cq} \in \mathtt{UNSAT}]|$
%     }
%     $\mathtt{im}\_c[\mathtt{c}] = \sum_{\mathtt{q} \in N(\mathtt{c})}\mathtt{im}\_q[\mathtt{q}]$
% }
% return $\mathtt{im}\_c$
% \end{algorithm}
\begin{algorithm}
\caption{Find\_IMs}\label{alg:find_im_opt}
\SetKwFunction{KwFn}{Find\_IMs}
\SetKwInOut{Input}{input}\SetKwInOut{Output}{output}

\Input{$\mathtt{UNSAT}$, $\mathtt{nc}$, $\mathtt{nv}$}
\Output{List of IM values for the stabilizer checks}

$\mathtt{im\_data} \gets [0]*n$ \tcp*[r]{n = number of data qubits}
$\mathtt{active\_vars} \gets \emptyset$

\BlankLine
\tcp{Step 1: IM values for data qubits}
\For{$\mathtt{cq} \in \mathtt{UNSAT}$}{
    \For{$\mathtt{dq} \in \mathtt{nc}[\mathtt{cq}]$}{
        $\mathtt{im\_data}[\mathtt{dq}] \gets \mathtt{im\_data}[\mathtt{dq}] + 1$\;
        $\mathtt{active\_vars} \gets \mathtt{active\_vars} \cup \{\mathtt{dq}\}$
    }
}

\BlankLine
\tcp{Step 2: IM values of stabilizer checks}
$\mathtt{im\_check} \gets [0] * m$ \tcp*[r]{m = number of stabilizers}

\For{$\mathtt{dq} \in \mathtt{active\_vars}$}{
    $\mathtt{val} \gets \mathtt{im\_data}[\mathtt{dq}]$\;
    \For{$\mathtt{cq} \in \mathtt{nv}[\mathtt{dq}]$}{
        $\mathtt{im\_check}[\mathtt{cq}] \gets \mathtt{im\_check}[\mathtt{cq}] + \mathtt{val}$
    }
}

\BlankLine
\Return $\mathtt{im\_check}$

\end{algorithm}
\section{Constant time algorithm to generate level $1$ computation tree}
Here we give the algorithm, which outputs a list of all the leaf check nodes from the level $1$ computation tree of an unsatisfied stabilizer check node.
In the QCCNR decoder, we exclusively use this algorithm to construct the sample space of check removal.
We discussed before that the removal of check nodes has always been done from level $1$ of any $T(c_r) : c_r \in \mathtt{UNSAT}$.
\begin{algorithm}
\caption{Layer1\_CT}\label{alg:layer1_tree}
\SetKwFunction{KwFn}{Generate\_Layer1}
\SetKwInOut{Input}{input}\SetKwInOut{Output}{output}

\Input{$\mathtt{check\_node}$, $\mathtt{nc}$, $\mathtt{nv}$}
\Output{List of leaf check nodes from the level $1$ computation tree $T(\mathtt{check\_node})$}

$\mathtt{neighbors} \gets \emptyset$

\BlankLine

\For{$\mathtt{dq} \in \mathtt{nc}[\mathtt{check\_node}]$}{
    \For{$\mathtt{cq} \in \mathtt{nv}[\mathtt{dq}]$}{
        \If{$\mathtt{cq} \neq \mathtt{check\_node}$}{
            $\mathtt{neighbors} \gets \mathtt{neighbors} \cup \{\mathtt{cq}\}$
        }
    }
}

\BlankLine
\Return $\text{list}(\mathtt{neighbors})$

\end{algorithm}

\section{Decoding modes of BP (min-sum algorithm)}
Here we describe the $\mathsf{MAIN\_BP}$ and $\mathsf{SUB\_BP}$ subroutines used in Algorithm \ref{alg:bp-qcnr}.
These are essentially the standard min-sum algorithm with a tracker for the change in the residual syndrome.
The tracker essentially indicates when the min-sum decoder gets trapped in the current settings and therefore different measures needs to be taken to mitigate the trapping set configurations.
\begin{algorithm}
\caption{MAIN\_BP / SUB\_BP}\label{alg:main_bp}
\SetKwInOut{Input}{input}\SetKwInOut{Output}{output}

\Input{Parity-check matrix $\mathtt{pcm}$, channel model $p$, maximum iterations $\mathtt{max_{iter}}$, $\mathtt{nc}$, $\mathtt{nv}$, tolerance $\mathtt{tol}$, syndrome $\mathtt{s}$}
\Output{Predicted error $\hat{e}_{bp}$ and predicted syndrome $\hat{s}_{bp}$}

$\mathtt{prev\_diff} \gets \text{None}$\;
$\mathtt{unchanged} \gets 0$\;

\BlankLine
\For{$t = 1$ \KwTo $\mathtt{max_{iter}}$}{
    \BlankLine
    \tcp{Single message passing iteration of min-sum algorithm}
    $\hat{\mathtt{e}}_{bp} \gets \text{min\_sum}(\mathtt{pcm}, p).\mathtt{decode}(\mathtt{s})$

    $\hat{\mathtt{s}}_{bp} \gets \mathbf{0}$\;

    \BlankLine
    \ForEach{$dq \in \mathrm{supp}(\hat{\mathtt{e}}_{bp})$}{
        \ForEach{$cq \in \mathtt{nv}[dq]$}{
            $\hat{\mathtt{s}}_{bp}[cq] \gets \hat{\mathtt{s}}_{bp}[cq] \oplus 1$
        }
    }

    $\mathtt{diff} \gets \|\mathtt{s}\| - \|\hat{s}_{bp}\|$\tcp*[r]{Weight difference}

    \BlankLine
    \If{$\mathtt{diff} = \mathtt{prev\_diff}$}{
        $\mathtt{unchanged} \gets \mathtt{unchanged} + 1$\;
        
        \If{$\mathtt{unchanged} = \mathtt{tol}$}{
            \textbf{break}\tcp*[r]{Early stopping}
        }
    }
    \Else{
        $\mathtt{unchanged} \gets 0$\;
    }

    $\mathtt{prev\_diff} \gets \mathtt{diff}$\;
}

\BlankLine
\Return $\hat{\mathtt{e}}_{bp}, \hat{\mathtt{s}}_{bp}$

\end{algorithm}
\newpage
% \section{Comparison between random check removal and IM assisted removals}

% \section{Oscillating behaviour}
% \label{ap:oscillatory-3-3-cts}
% \bibliographystyle{apsrev4-2}
\bibliography{references}% Produces the bibliography via BibTeX.

\end{document}